\documentclass[journal,12pt,onecolumn,draftclsnofoot]{IEEEtran}
%


%
\usepackage{ifpdf}

%
\usepackage{cite}

%
\ifCLASSINFOpdf
  \usepackage[pdftex]{graphicx}
\else
\fi
%
%

%
\usepackage{amsmath}
\usepackage{amsfonts}
\usepackage{amssymb}
\usepackage{amsthm}
\usepackage{color}
\usepackage[usenames,dvipsnames,svgnames,table]{xcolor}
\usepackage{algorithmic}
\usepackage{algorithm}
\usepackage{array}
\usepackage{url}
\usepackage{float}
\usepackage{graphicx}
\usepackage{subfigure}

\newtheorem{theorem}{Theorem}

\newtheorem{example}{Example}

\newtheorem{corollary}{Corollary}
\newtheorem{lemma}{Lemma}
\newtheorem{remark}{Remark}

\newcommand{\cgraf}{{\mathcal G}}
\newcommand{\cV}{\mathcal{V}}
\newcommand{\cB}{\mathcal{B}}
\newcommand{\cS}{\mathcal{S}}

\DeclareMathOperator{\lcm}{lcm}

\newcommand{\cP}{\mathcal{P}}
\newcommand{\bm}[1]{\mathbf{#1}}

\newcommand{\mc}[1]{{\mathcal #1}}
\newcommand{\prob}[1]{\mbox{Prob}\left[#1\right]}
\newcommand{\vol}[1]{\mbox{vol}\left[#1\right]}

\hyphenation{op-tical net-works semi-conduc-tor}

\begin{document}
%
\title{On Communication for Distributed Babai Point Computation}
\author{Maiara~F.~Bollauf,
        Vinay~A.~Vaishampayan,
        and~Sueli~I.~R.~Costa.
\thanks{M. F. Bollauf and S. I. R. Costa are with the Institute of Mathematics, Statistics and Computer Science, University of Campinas (e-mail: bollauf@ieee.org, sueli@unicamp.br).}
\thanks{V. A. Vaishampayan is with Department of Engineering Science and Physics, City University of New York (CUNY) (e-mail: Vinay.Vaishampayan@csi.cuny.edu).}
\thanks{This paper was presented in part at the IEEE International Symposium on
Information Theory, Aachen, Germany, 2017 \cite{BVC:2017}.}}

\maketitle


\begin{abstract}
We present a communication-efficient distributed protocol for computing the Babai point, an approximate nearest point for a random vector ${\bf X}\in\mathbb{R}^n$ in a given lattice. 
We show that the protocol is optimal in the sense that it minimizes the sum rate when the components of $\bm{X}$ are mutually independent. We then investigate the error probability, i.e. the probability that the Babai point does not coincide with the nearest lattice point. In dimensions two and three, this probability is seen to grow with the packing density.  For higher dimensions, we use a bound from probability theory to estimate the error probability for some well-known lattices.  Our investigations suggest that for uniform distributions, the error probability becomes large with the dimension of the lattice, for lattices with good packing densities. We also consider the case where  $\bm{X}$ is obtained by adding Gaussian noise to a randomly chosen lattice point. In this case, the error probability goes to zero with the lattice dimension when the noise variance is sufficiently small. In such cases, a distributed algorithm for finding the approximate nearest lattice point is sufficient for finding the nearest lattice point.  
\end{abstract}

{\small \textbf{\textit{Index terms}---Lattices, distributed function computation, approximate nearest lattice point, communication complexity.}}

\IEEEpeerreviewmaketitle

%

\section{Introduction}

We are given a lattice $\Lambda \subset \mathbb{R}^n$ and a random vector  of observations, ${\bf X}=(X_1,X_2,\ldots,X_n)\in \mathbb{R}^n$. Each $X_i$ is available at a distinct sensor-processor node (SN), which is connected by a communication link to a central computing node (CN). The objective is to compute at the CN, the Babai point, a well-known approximation to the nearest lattice point of ${\bf X}$~\cite{babai}. Towards this end, the $i$th SN sends an approximation of $X_i$ to the CN at a communication rate of $R_i$ bits/sample. In this work, we present a communication protocol for this computation and show that it is optimal in the sense of minimizing the communication rate. We then investigate the connection between the structure of the lattice, as determined by its generator matrix, and the communication cost, the error probability (the probability that the Babai point does not coincide with the nearest lattice point) and the packing density. While this connection is of independent interest, it also allows a designer to understand situations under which any further communication for determining the true nearest lattice point is unnecessary.  
Our model for distributed computation, is referred to as the  centralized model, and  is illustrated in Fig.~\ref{fig-p1}.

We note that our problem is a special case of the general distributed function computation problem, where the objective is to compute a given function $f(X_1,X_2,\ldots,X_n)$ at the CN based on information communicated from each of the $n$ SN's~\cite{Orlitsky:2001}. 
In our case, $f$ is the function which computes an approximate nearest lattice point based on the  nearest plane algorithm \cite{babai} and $f({\bf X})$ is  the Babai point.

\begin{figure}[H]
\begin{center}
\includegraphics[scale=0.28]{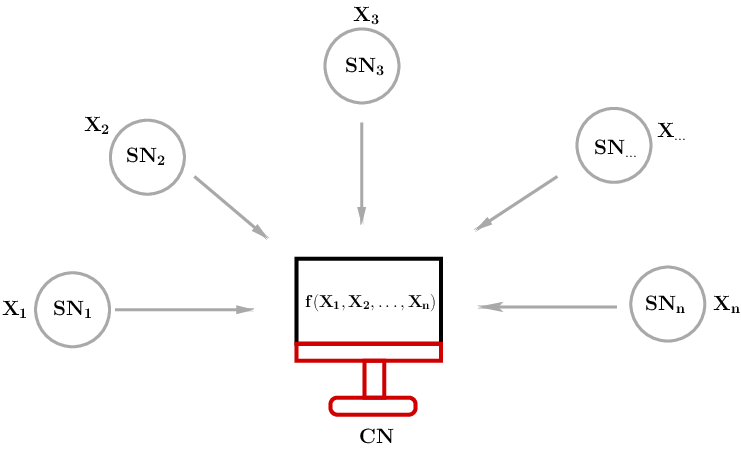}
\caption{Centralized model for distributed computation. Each sensor node (SN) encodes its observation at a finite rate and sends it to the central compute node (CN), where the function $f$ is to be computed. The problem is to determine the tradeoff between communication rate and the accuracy with which the function is computed. In this work, the function is the approximate nearest lattice point (Babai point).}\label{fig-p1}
\end{center}
\end{figure}

Interest in communication issues for the distributed computation of the Babai point, and more generally for the nearest lattice point~\cite{VB:2017}, arise in many contexts: wireless communication, machine learning and cryptography. We briefly describe the applications next.

In MIMO wireless systems, the decoding problem is equivalent to finding a nearest lattice point. Well-known systems such as V-BLAST prefer to find the Babai point because of the high computational complexity of finding the nearest lattice point. Thus, distributed computation of the Babai point is useful in distributed MIMO receivers~\cite{VVMIMO:2020}. More generally, communication issues for channel decoding and demodulation have been studied in the context of cooperative communications~\cite{Draper:1998},~\cite{Wang:2007}.  For a comprehensive review of lattice methods in communication, see~\cite{zamir2014}.

In recent years, interest has grown in communication issues related to distributed machine learning~\cite{LASY:2014}. Such problems also fit into the distributed function computation framework, and we expect that lattice methods will eventually play an important role here.
	

The study of the approximate nearest lattice point is also of interest in cryptography. In fact the  nearest lattice point problem has been proposed as a basis for lattice cryptography \cite{Ajtai:1996},\cite{galbraith},\cite{mathcrypto},\cite{MicGold:2012},\cite{peikert}, due to its hardness \cite{emdeboas}, examples being the GGH and LWE cryptosystems. The security of such cryptossystems rely on the solution of this problem and the nearest plane algorithm is used to estimate the resistance to attack when the received message is relatively close to the lattice point to be decoded.  Our work is of interest in understanding the communication required in a distributed lattice-based cryptosystem. 

This paper is based on preliminary work presented in~\cite{BVC:2017}.

%
%

The paper is organized as follows. Mathematical preliminaries are in Sec.~\ref{sec1}. A communication protocol and its associated communication cost are presented in Sec.~\ref{sec3}, along with a proof of optimality.  The error probability,
~is analyzed in Sec.~\ref{sec4} for dimensions two and three, for a uniform conditional distribution on ${\bf X}$. This requires a special basis for a lattice as described  in Sec~\ref{secMOS}. This section also examines the relation between the error probability and the packing density of the lattice being considered.  Since these calculations are difficult to generalize to higher dimensions, we use probabilistic tools to understand the behavior of the error probability and its relation to the `sphericity' of a Voronoi cell of the lattice in Sec.~\ref{sec5}, in terms of its covering and packing radii. In this section we also discuss and compare results about error probability and packing density when ${\bf X}$ is obtained by adding  Gaussian noise to a randomly chosen lattice point.  Conclusions and future work are in Sec.~\ref{secC}.

\section{Lattice Basics and Preliminary Calculations} 
\label{sec1} 

Notations, lattice basics and error probability simplifications are described in this section. 

	A (full rank) lattice $\Lambda \subset \mathbb{R}^n$ is the set of all integer linear combinations of a set of linearly independent vectors $\{\bm{v_1},\bm{v_2},\ldots,\bm{v_n}\} \subset \mathbb{R}^n,$ called \textit{lattice basis}. We can also write $\Lambda=\{V\bm{u},~\bm{u} \in \mathbb{Z}^n\},$ where the columns of the \textit{generator matrix} $V$ are the basis vectors $\bm{v_{1}}, \dots, \bm{v_{n}}$. The matrix $A=V^{T}V$ is the associated \textit{Gram matrix} and the $(i,j)$ entry of $A$ is the Euclidean inner product of $\bm{v_{i}}$ and $\bm{v_{j}},$ which here will be denoted by $\bm{v_{i}} \cdot \bm{v_{j}}.$  

	A set $\mathcal{F}$ is called a \textit{fundamental region} of a lattice $\Lambda$ if all its translations by elements of $\Lambda$ cover $\mathbb{R}^{n},$ i.e., $\underset{\lambda \in \Lambda}{\bigcup} \mathcal{F} + \lambda = \mathbb{R}^n$ and the interior of $\lambda_1 + \mathcal{F}$ and $\lambda_2 + \mathcal{F}$ do not intersect for $\lambda_1 \neq \lambda_2.$  The \textit{Voronoi region} or \textit{Voronoi cell} $\cV(\lambda)$ is an example of fundamental region and it is defined as $$\cV(\lambda)=\{\bm{x} \in \mathbb{R}^{n}: ||\bm{x}-\bm{\lambda}|| \leq ||\bm{x}-\bm{\tilde{\lambda}}||, \ \text{for all} \ \bm{\tilde{\lambda}} \in \Lambda\},$$ where $||.||$ denotes the Euclidean norm. Note that  $\cV(\lambda)$ is congruent to $\cV(0)$. The \textit{volume} of a lattice $\Lambda$ is the volume of any of its fundamental regions and it is given by $\text{vol}(\Lambda)=|\det(V)|,$ where $V$ is a generator matrix of $\Lambda.$ We refer to $\cV(\lambda)$ as a \emph{Voronoi cell}.
	
	A vector $\bm{v}$ is called a \textit{Voronoi vector} if the hyperplane $\{\bm{x} \in \mathbb{R}^{n}: \bm{x} \cdot \bm{v}=\frac{1}{2} \bm{v} \cdot \bm{v}\}$ has a non-empty intersection with $\mathcal{V}(0).$ A Voronoi vector is said to be \textit{relevant} if this intersection is an $(n-1)-$dimensional face of $\cV(0).$  
	
	The packing radius $r_{\text{pack}}$ of a lattice $\Lambda$ is half of the minimum distance between lattice points and the packing density $\Delta(\Lambda)$ is the fraction of  space that is covered by balls $\mathcal{S}(\bm{\lambda}, r_{\text{pack}})$ of radius $r_{\text{pack}}$ in $\mathbb{R}^{n}$ centered at lattice points $\lambda \in \Lambda,$ i.e., $\Delta(\Lambda)  =  \frac{\text{vol} \ S(0,r_{\text{pack}})}{\text{vol} (\Lambda)}.$
	
	The objective of the \textit{nearest lattice point problem} is to find  $$\bm{u} = \arg \min_{\overline{u} \in \mathbb{Z}^{n}} \mid\mid \bm{x}-V\bm{\overline{u}} \mid\mid^{2},$$ where the norm considered is the standard Euclidean norm. The nearest lattice point to $\bm{x}$ is then given by $\bm{x_{nl}}=V\bm{u}$.  

	We denote the integer and fractional parts of ${\bf x} \in \mathbb{R}$ by $\lfloor {\bf x} \rfloor $ and $\{ {\bf x}\}$, respectively. Thus ${\bf x}=\lfloor {\bf x} \rfloor  + \{{\bf x} \}$ and $0\leq \{{\bf x} \} < 1$. The nearest integer function is $[x]=\lfloor x+1/2\rfloor$.

	For a triangular generator matrix, the \textit{nearest plane (np) algorithm}~\cite{babai} computes $\bm{x_{np}}$, an approximation to $\bm{x_{nl}}$, given by $\bm{x_{np}}=V{\bm{u}} = {u}_1 \bm{v_1} + {u}_2 \bm{v_2}+\ldots+{u}_n \bm{v_n}$, where ${u}_i \in \mathbb{Z}$ is given by 
	\begin{equation}
	{u}_i=\left[ \frac{x_i-\sum_{j=i+1}^nv_{i,j}{u}_j}{v_{i,i}}\right]
	\label{eqn:babaicoeff}
	\end{equation}
	in the order $i=n,n-1,\ldots,1$. We refer to $\bm{x}_{np}$ as the Babai point for $\bm{x}$ and the closure of the set of $\bm{x}$ mapped to $\bm{y}\in \Lambda$ as the  Babai cell ${\mathcal B}(\bm{y})$.  A  method for finding $\bm{x_{np}}$  for general $V$ is in \cite{babai}. 
	
	
	
 	
 \begin{example} Fig.~\ref{fig-np} represents the Babai cells and the Voronoi cells (hexagons) for the hexagonal lattice $A_2$ generated by $\{(1,0),(1/2, \sqrt{3}/2)\}$ and illustrates how the \textit{np algorithm} approximates the nearest point problem. 

\begin{figure}[h!]
\begin{center}
		\includegraphics[scale=0.26]{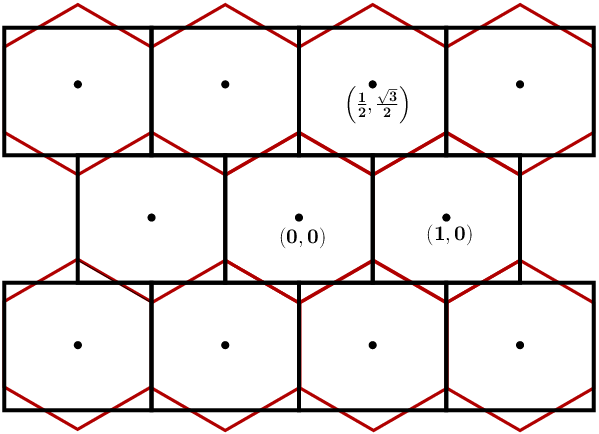}  
\caption{{Babai and Voronoi cells  for the hexagonal lattice $A_2$}} \label{fig-np}
\end{center}
\end{figure}
\end{example}

	In case the generator matrix $V$ is upper triangular with $(i,j)$ entry $v_{ij}$, each rectangular cell is axis-aligned and has sides of length  $|v_{11}|,|v_{22}|,\ldots,|v_{nn}|.$ 
~We remark that given a lattice $\Lambda$ with an arbitrary generator matrix $V \in \mathbb{R}^{n\times n}$ we can always apply the QR decomposition $V=QR,$ where $Q \in \mathbb{R}^{n\times n}$ is an orthogonal matrix and $R \in \mathbb{R}^{n \times n}$ is an upper triangular matrix. The matrix $R$ will then generate a rotated (and equivalent) version of the original lattice $\Lambda.$
	
\subsection{Error Probability}	
\label{sec:errprob}
We now define and simplify the error probability, $P_e$ and its complement, $P_c$, the success probability, for use in Secs.~\ref{sec4} and \ref{sec5}. The error probability and its complement are defined by $P_e=1-P_c=\prob{\bm{X}_{nl}\neq \bm{X}_{np}}$. Clearly $P_c=\sum_{\bm{y}\in \Lambda}\prob{\bm{X}_{nl}= \bm{y}, \bm{X}_{np}=\bm{y}}$. Two situations of interest are (i) $\bm{X}$ is uniformly distributed over a union of Babai cells, which we refer to hereafter as the uniform distribution case, and (ii) $\bm{x}=\bm{y}+\bm{z},$ where $\bm{y}\in \Lambda$ is the transmitted lattice vector,  and $\bm{z} \in \mathbb{R}^n$ is  Gaussian noise, $\mathcal{N}(0, \sigma^2\bm{I})$. We refer to this as the Gaussian case. 

In the uniform case $$P_c=\frac{\vol{\mc{V}(0)\bigcap \mc{B}(0)}}{\vol{\mc{B}(0)}}.$$

In the Gaussian case, 

{\small \begin{eqnarray}
P_c  & =  & \sum_{\bm{y} \in \Lambda} \sum_{\bm{y}' \in \Lambda}\prob{\bm{X}_{nl}=\bm{y}',\bm{X}_{np}=\bm{y}',\bm{Y}=\bm{y}} \nonumber \\
& \stackrel{(a)}{=} & \sum_{\bm{y}\in \Lambda} \prob{\bm{Y}=\bm{y}} \sum_{\bm{y}' \in \Lambda}\prob{\bm{Z}\in \cB(\bm{y}'-\bm{y})\bigcap \cV(\bm{y}'-\bm{y})}  \nonumber \\
& = & \sum_{\bm{y}\in \Lambda} \prob{\bm{Y}=\bm{y}} \sum_{\bm{y}' \in \Lambda}\prob{\bm{Z}\in \cB(\bm{y}')\bigcap \cV(\bm{y}')} \nonumber \\
& = &  \sum_{\bm{y}' \in \Lambda}\prob{\bm{Z}\in \cB(\bm{y}')\bigcap \cV(\bm{y}')} \nonumber \\
& = &\underbrace{\prob{\bm{Z}\in \cB(\bm{0})\bigcap \cV(\bm{0})} }_{T}+ \sum_{\bm{y}' \in \Lambda, \bm{y}' \neq 0}\prob{\bm{Z}\in \cB(\bm{y}')\bigcap \cV(\bm{y}')},
\label{eqn:gaussprob}
\end{eqnarray}}
where in (a) we have asserted the independence of $\bm{Z}$ and $\bm{Y}$.
For small noise variance, the dominant term in the above sum is $T=\prob{\bm{Z} \in \cV(0) \bigcap \cB(0)}$. Note also that $P_c=1$ when the basis vectors are mutually orthogonal.

It is an important fact that the Babai cell $\cB(0)$ is dependent on the choice of the lattice basis, whereas the Voronoi cell is invariant to the choice of lattice basis. Thus, the error probability depends on the choice of basis, and in particular, the order in which the basis vectors are listed. Thus, in future sections, where we evaluate the error probability for a given  generator matrix $V$, we determine the Babai cell for all $n!$ column permutations of $V$ by applying the QR decomposition to each permutation. The error probability is then the minimum that is obtained over all column permutations.

\section{The Distributed Babai Protocol (DBP)  and its Communication Cost} 
\label{sec3}

	
	We now describe the  protocol DBP, by which the Babai point $\bm{x}_{np}=V{\bm{u}}$ can be determined exactly at the computing node with a finite rate of transmission. We assume that 
\begin{enumerate}
\item the lattice $\Lambda$ has upper triangular generator matrix $V$, and
\item the ratio of any two non-zero entries in any row of ${V}$ are rational numbers.
\end{enumerate}

Define integers $p_{ml}$, $q_{ml}>0$ and relatively prime, by canceling out common factors in $v_{ml}/v_{mm}$, i.e. let $p_{ml}/q_{ml}=v_{ml}/v_{mm}$.  Let $q_m=\lcm \ \{q_{ml}, l>m\}$, where $\lcm$ denotes the least common multiple  of its arguments. By definition $q_n=1$.
The `interference' term $\nu_m$ is given by  $\nu_m={\sum_{l=m+1}^{n} u_{l}v_{ml}}/{v_{mm}}$. In terms of integer and fractional parts, $\nu_m=\lfloor \nu_m \rfloor +\{\nu_m\}$, $0 \leq \{\nu_m\} < 1$ and further, $\{\nu_m\}$ is of the form $s/q_m$, $0\leq s < q_m$. Let $\cS_m\subset\{0,1,\ldots,q_m-1\}$ be the set of values taken by $\{\nu_m\}q_m$ with positive probability. For most source probability distributions $\cS_m=\{0,1,\ldots,q_m-1\}$. However, in some cases, when $q_m$ is large this may not be the case. One such situation is described at the end of Sec.~\ref{sec:rateexamples}.

\vspace{0.1in}
\noindent
{\bf Action of the Encoder in the $m$th SN}:

\noindent
Define $s_m$ 
to be the largest integer $s\in\cS_m$ for which 
\begin{equation}
[x_m/v_{mm}-s/q_m]=[x_m/v_{mm}]
\label{eqn:ess}
\end{equation}
Then the  $m$th SN sends 
\begin{equation}
\tilde{u}_m=[x_m/v_{mm}]
\label{eqn:btilde}
\end{equation}
 and $s_m$ to the CN in the order $m=n,n-1,\ldots,2,1$  (by definition $s(n)=0$). 

\vspace{0.1in}
\noindent
{\bf Action of the Decoder in the CN}:

\noindent
The decoder computes $\bm{u}=(u_1,u_2,\ldots,u_n)$ where,
\begin{eqnarray} \label{decision}
\lefteqn{{u}_m=} \ \ \ \ \ \ \ \ \ \left\{ \begin{array}{cc}
\tilde{u}_m-\left\lfloor \frac{\sum_{l=m+1}^{n}{u}_{l}v_{ml}}{v_{mm}}\right \rfloor, & f_m\leq s_m,  \\
\tilde{u}_m-\left\lfloor \frac{\sum_{l=m+1}^{n}u_{l}v_{ml}}{v_{mm}}\right \rfloor-1,  & f_m >  s_m,
\end{array}
\right.
\label{eqn:dec}
\end{eqnarray}
where $\tilde{u}_m$ is given by \eqref{eqn:btilde}, 
$$f_m=\left\{  \frac{\sum_{l=m+1}^{n}{u}_{l}v_{ml}}{v_{mm}}\right\}{q_m}$$
and computation proceeds in the order $m=n,n-1,\ldots,1$.

\begin{theorem}\label{thmcost}(\textit{Decoder output is the Babai point})
The output of the decoder coincides with  the  solution $\bm{u}$ given in \eqref{eqn:babaicoeff}.
\end{theorem}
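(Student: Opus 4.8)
The plan is to prove the identity by backward induction on the coordinate index $m=n,n-1,\ldots,1$, exploiting the fact that the decoder produces $u_n,u_{n-1},\ldots,u_1$ in exactly this order. The inductive hypothesis is that the decoder has already reproduced the true Babai coefficients $u_{m+1},\ldots,u_n$ of \eqref{eqn:babaicoeff}; granting this, the interference term $\nu_m=\sum_{l=m+1}^n u_l v_{ml}/v_{mm}$ formed by the decoder coincides with the one appearing in the nearest-plane recursion, and so do its integer part $\lfloor\nu_m\rfloor$ and the quantity $f_m=\{\nu_m\}q_m$. It therefore suffices to show that, for a single coordinate, the two-case decoder rule \eqref{eqn:dec} returns $u_m=\bigl[(x_m-\sum_{l>m}v_{ml}u_l)/v_{mm}\bigr]$.

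First I would perform an algebraic reduction that isolates the fractional part of the interference. Writing $y=x_m/v_{mm}$ and using $\nu_m=\lfloor\nu_m\rfloor+f_m/q_m$, the Babai coefficient \eqref{eqn:babaicoeff} becomes $u_m=[\,y-\nu_m\,]$. Since $\lfloor\nu_m\rfloor$ is an integer and the nearest-integer map satisfies $[t-k]=[t]-k$ for every $k\in\mathbb{Z}$, this collapses to
\begin{equation}
u_m=\Bigl[\,y-\tfrac{f_m}{q_m}\,\Bigr]-\lfloor\nu_m\rfloor .
\end{equation}
Comparing with \eqref{eqn:dec} and recalling $\tilde u_m=[y]$, the theorem reduces to the single claim that $[\,y-f_m/q_m\,]$ equals $[y]$ when $f_m\le s_m$ and equals $[y]-1$ when $f_m>s_m$.

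The core of the argument is a monotonicity-and-threshold analysis of the auxiliary function $g(s)=[\,y-s/q_m\,]$ on $s\in\{0,1,\ldots,q_m-1\}$. Because $0\le s/q_m<1$ we have $y-1<y-s/q_m\le y$, and since the nearest-integer map is non-decreasing, $g$ is a non-increasing step function taking only the two values $[y]$ and $[y]-1$; hence there is a threshold $s^\ast$ with $g(s)=[y]$ iff $s\le s^\ast$. By its defining property \eqref{eqn:ess}, $s_m$ is precisely the largest element of $\mathcal{S}_m$ with $g(s_m)=[y]$, i.e. the largest element of $\mathcal{S}_m$ not exceeding $s^\ast$. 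For the realized value $f_m$, which lies in $\mathcal{S}_m$, I would argue both implications: if $f_m\le s_m$ then $f_m\le s^\ast$ and $g(f_m)=[y]$; conversely if $f_m>s_m$ then necessarily $f_m>s^\ast$, for otherwise $f_m$ would be an element of $\mathcal{S}_m$ below $s^\ast$ strictly larger than the maximal such element $s_m$, a contradiction, whence $g(f_m)=[y]-1$. This yields exactly the two-case rule required above.

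The base case $m=n$ is immediate: the empty sum gives $\nu_n=0$ and $f_n=0$, and with $q_n=1$, $s_n=0$ one reads off $u_n=\tilde u_n=[x_n/v_{nn}]$ from both \eqref{eqn:babaicoeff} and \eqref{eqn:dec}. I expect the main obstacle to be the faithful handling of the set $\mathcal{S}_m$ in the threshold step. The equivalence $g(f_m)=[y]\iff f_m\le s_m$ is valid because the realized $f_m=\{\nu_m\}q_m$ belongs to $\mathcal{S}_m$ (it is a value attained with positive probability); this is exactly what allows the encoder to optimize $s_m$ over $\mathcal{S}_m$ rather than over all of $\{0,\ldots,q_m-1\}$. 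I would also verify that $s_m$ is well defined, i.e. that some element of $\mathcal{S}_m$ satisfies \eqref{eqn:ess} (the value $s=0$ always satisfies the criterion, so this holds whenever $0\in\mathcal{S}_m$), and confirm that restricting attention to $\mathcal{S}_m$ does not disturb the monotone threshold comparison. Checking these points carefully is where the argument is most delicate.
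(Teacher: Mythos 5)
Your proof is correct and takes essentially the same route as the paper's: the paper likewise rewrites \eqref{eqn:babaicoeff} as $u_m=\left[\,x_m/v_{mm}-\{\nu_m\}\,\right]-\lfloor\nu_m\rfloor$ with $\{\nu_m\}=s/q_m$ and then asserts the two-case comparison of $s$ against $s_m$, which is exactly your reduction. Your backward induction and the monotone-threshold analysis of $g(s)=\left[\,y-s/q_m\,\right]$ over $\mathcal{S}_m$ (including the check that the realized $f_m$ lies in $\mathcal{S}_m$) simply make explicit the steps the paper leaves implicit, and they are sound.
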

\begin{proof}
Rewrite  \eqref{eqn:babaicoeff} in terms of   fractional and integer parts  to get
\begin{eqnarray}
u_m & = & \left[ \frac{x_m}{v_{mm}}-\left\{\frac{\sum_{l=m+1}^{n} u_{l}v_{ml}}{v_{mm}}\right\} \right]-  \left\lfloor \frac{\sum_{l=m+1}^{n} u_{l}v_{ml}}{v_{mm}}\right\rfloor, ~ ~  m=n,n-1,\ldots,1.
\label{eqn:BabaiCoeff-1}
\end{eqnarray}
The fractional part in the above equation is of the form $s/q_m$, $s \in \mathbb{Z}$ and further, $0 \leq s < q_m$.  Thus
\begin{eqnarray} \label{eqn:decision}
\lefteqn{{u}_m=} \ \ \ \ \ \ \ \ \ \left\{ \begin{array}{cc}
\tilde{u}_m-\left\lfloor \frac{\sum_{l=m+1}^{n}{u}_{l}v_{ml}}{v_{mm}}\right \rfloor, & s\leq s_m,  \\
\tilde{u}_m-\left\lfloor \frac{\sum_{l=m+1}^{n}u_{l}v_{ml}}{v_{mm}}\right \rfloor-1,  & s >  s_m,
\end{array}
\right.
\end{eqnarray}
where $\tilde{u}_m$ is given by \eqref{eqn:btilde}, and the computation  of $u_m$ is performed at the CN in the order $m=n,n-1,\ldots,1$.
\end{proof}

\subsection{Communication Cost of Protocol DBP}
\begin{theorem} (\textit{Sum rate of the protocol DBP})
Assume that $X_i$, $i=1,2,\ldots,n$ are mutually independent and identically distributed with known marginal probability distribution. The sum rate $R_{sum}$ of protocol DBP is
\begin{equation}
R_{sum}=\sum_{i=1}^nR_i=\sum_{i=1}^n H(\tilde{U}_i,S_i).
\label{eqn:sumrateDBP}
\end{equation}
\end{theorem}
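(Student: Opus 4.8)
The plan is to reduce the claim to Shannon's lossless (noiseless) source coding theorem applied independently at each sensor node.

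First I would read off from the description of the encoder what the $m$th SN actually transmits: the pair $(\tilde{U}_m,S_m)$, where $\tilde{U}_m=[X_m/v_{mm}]$ by \eqref{eqn:btilde} and $S_m$ is the largest $s\in\cS_m$ satisfying \eqref{eqn:ess}. The crucial observation is that both components are \emph{deterministic functions of $X_m$ alone}; consequently each SN can form its message without access to any other observation, and the message alphabet is a subset of $\mathbb{Z}\times\{0,1,\ldots,q_m-1\}$. Next I would invoke Theorem~\ref{thmcost}, which guarantees that from the collection $\{(\tilde{U}_m,S_m)\}_{m=1}^n$ the CN reconstructs $\bm{u}$, hence the Babai point, exactly. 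Therefore correct operation of the protocol is equivalent to the CN recovering every pair $(\tilde{U}_m,S_m)$ losslessly, and inspection of \eqref{eqn:dec} shows both coordinates are genuinely consumed by the decoder (the integer $\tilde{U}_m$ appears directly, while $S_m$ is compared against $f_m$), so neither component is redundant.

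Modeling the observations as a stream of i.i.d. copies of $\bm{X}$ over time, for each fixed $m$ the sequence of transmitted pairs $(\tilde{U}_m,S_m)$ is i.i.d. with the law induced by the known marginal distribution of $X_m$. Encoding this stream losslessly at the $m$th SN is then a standard point-to-point source coding instance, so by Shannon's noiseless source coding theorem the minimum per-sample rate is the entropy of a single pair, $R_m=H(\tilde{U}_m,S_m)$; note that jointly entropy-coding the two coordinates (rather than coding them separately) is exactly what produces the joint entropy and avoids the strictly larger cost $H(\tilde{U}_m)+H(S_m)$. Summing over the $n$ sensors, which operate in isolation, gives $R_{sum}=\sum_{m=1}^n R_m=\sum_{m=1}^n H(\tilde{U}_m,S_m)$, which is \eqref{eqn:sumrateDBP}.

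I expect the main point requiring care to be the justification that each SN's rate is \emph{exactly} $H(\tilde{U}_m,S_m)$ and not merely an upper bound: one must argue both achievability (via entropy coding of the i.i.d. pair stream) and the converse that no smaller rate supports lossless recovery of both coordinates, which the protocol demands. I would also flag that the mutual-independence hypothesis is not strictly needed for the sum rate of \emph{this} protocol, since each sensor compresses without side information; its role is to ensure $\sum_{m} H(\tilde{U}_m,S_m)$ equals the joint entropy $H\big((\tilde{U}_1,S_1),\ldots,(\tilde{U}_n,S_n)\big)$, the identity exploited in the subsequent optimality argument.
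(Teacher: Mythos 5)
Your proof is correct and is precisely the argument the paper leaves implicit: the theorem is stated without an explicit proof, the intended justification being exactly per-sensor entropy coding of the transmitted pair $(\tilde{U}_m,S_m)$ --- each a deterministic function of $X_m$ alone --- whose lossless recovery suffices for the Babai point by Theorem~\ref{thmcost}. Your closing remark also matches the paper: the independence hypothesis is genuinely used only in Sec.~\ref{sec:lower}, where it makes the conditional-graph-entropy lower bound $\sum_{i=1}^n H(\tilde{U}_i,S_i\mid \bm{X}_{i^c})$ coincide with the protocol's sum rate \eqref{eqn:sumrateDBP}.
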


As an example, 
suppose that $\bm{X}$ is uniformly distributed over a rectangular region $[-A/2,A/2]^n$, for $A$ large. The total rate is  
\begin{eqnarray}
R_{sum} & = & n\log_2(A)-\log_2|\det V| +\sum_{i=1}^{n-1}H(S_i|\tilde{U}_i) \nonumber \\
& \leq  & n\log_2(A)-\log_2|\det V| +\sum_{i=1}^{n-1}\log_2(q_i). 
\label{eqn:bits}
\end{eqnarray}
The first two terms in \eqref{eqn:bits} can be interpreted as the rate required to compute the Babai point for a lattice  $\Lambda' \subset \mathbb{R}^{n}$ generated by orthogonal vectors $\{(v_{11}, 0, \dots, 0), \dots, (0, 0, \dots, v_{nn})\},$ where $v_{ii}, 1 \leq i \leq n$ are the diagonal elements from the upper triangular generator matrix $V$ of the lattice $\Lambda.$ Observe that the Babai cells of $\Lambda$ are congruent to those of $\Lambda'$, but are not aligned as they are in $\Lambda'$. The last term in \eqref{eqn:bits} is the additional communication cost because of the misalignment of the Babai cells of $\Lambda$. 

\subsection{Optimality of  Protocol DBP}
\label{sec:lower}

We prove optimality of the protocol DBP based on a  bound on the sum rate for the  distributed function computation problem from~\cite{Orlitsky:2001}. In order to make the derivation self-contained, we first summarize the salient facts about characteristic graphs and graph entropy which play a fundamental role in the bound derived in~\cite{Orlitsky:2001} before proceeding to derive a lower bound for protocol DBP.
Note that our bound is for continuous alphabets, and is based on a limiting form of the result stated in \cite{Orlitsky:2001}, for discrete alphabets. The limiting argument is self-evident and is not presented. 

Consider a function $f(x_1,x_2,\ldots,x_n)~:~\mathbb{R}^n \to \mathbb{Z}^n$, and our distributed computation setup where $x_i$ is available at the $i$th SN and $f$ is to be computed at the CN.  A lower bound on the communication rate from the $i$th SN to the CN is given by the minimum rate required to compute $f$, assuming that $x_j,~j\neq i$ is known at the receiver. We will use the notation $i^c=\{1\leq j \leq n,~j\neq i\}$ and $\bm{x}_{i^c}$ for the vector $(x_j,~j\neq i)$.
From \cite{Orlitsky:2001}, the minimum communication rate is given in terms of the conditional graph entropy of a specific graph. We now describe computation of the conditional graph entropy.  For convenience we will write $f(\bm{x})=f(x_i|\bm{x}_{i^c})$, when studying the communication rate from the $i$th SN to the CN, to emphasize the fact that $\bm{x}_{i^c}$ is side information at the CN.

The characteristic graph, $\cgraf_i$, of the function $f(x_i|\bm{x}_{i^c})$, has as its nodes the support of $x_i$, which in this case is $\mathbb{R}$. Two distinct nodes $x_i$ and $x'_i$ are connected by an edge if and only if (iff) there is an $\bm{x}_{i^c}$ for which $f(x_i|\bm{x}_{i^c})\neq f(x'_i|\bm{x}_{i^c})$. An independent set is a collection of nodes, no two of which are connected by an edge.  A maximal independent set is an independent set which is not contained in any other independent set. The minimum rate required to compute $f_i(x_i|\bm{x}_{i^c})$ with $\bm{x}_{i^c}$ known at the CN is given by the conditional graph entropy $H_{\cgraf_i}(X_i|\bm{X}_{i^c})$~\cite{Orlitsky:2001}, described next. Let $\Gamma_i$ be the collection of  maximal independent sets of $\cgraf_i$ and let $W$ be a random variable which takes the values $w\in \Gamma_i$---thus the realizations of $W$ are maximally independent sets.  Let $p(w|x_i,\bm{x}_{i^c})$ be a conditional probability distribution with the following properties:
\begin{enumerate}
\item \label{item:1.1} $p(w|x_i,\bm{x}_{i^c})=p(w|x_i)$, for all $w\in \Gamma_i$,$(x_i,\bm{x}_{i^c}) \in \mathbb{R}^n$. (Markov condition).
\item \label{item:1.2} $p(w|x_i)=0$ if $x_i \notin w$.
\item \label{item:1.3} $\sum_{w\in \Gamma_i} p(w|x_i) = 1$.
\end{enumerate}
Let $\cP_i$ be the collection of all such probability distributions. Then by definition
\begin{equation}
H_{\cgraf_i}(X_i|\bm{X}_{i^c})= \min_{p \in \cP_i}I(W;X_i|\bm{X}_{i^c}).
\label{eqn:cgent}
\end{equation}



We now apply this machinery for obtaining a lower bound on the rate $R_i$ for computing 
$$\bm{u}(x_i|\bm{x}_{i^c})=\left[ \frac{x_i-\sum_{j=i+1}^nv_{i,j}u_j}{v_{i,i}}\right],$$ for $i=n,n-1,\ldots,1.$
Our goal is to determine $\cgraf_i$ and its maximal independent sets, $i=1,2,\ldots,n$, and the probability distribution that solves \eqref{eqn:cgent}.

First consider $\cgraf_n$. In $\cgraf_n$, $x_n$ is \emph{disconnected} from $x'_n$  iff $[x_n/v_{n,n}]=[x'_n/v_{n,n}]$ or equivalently the maximal independent sets are the level sets of $[x_n/v_{n,n}]$. Since $x_n$ lies in exactly one of these sets, it follows from item~\ref{item:1.2}  and \eqref{eqn:btilde} that  $W=\tilde{U}_n$.  Hence $R_n \geq \min_{p\in \cP_n} I(W;X_n|\bm{X}_{n^c})= H(\tilde{U}_n|\bm{X}_{n^c})$, since $H(\tilde{U}_n|X_n)=0$.

 Now consider $\cgraf_m$ for $m<n$. As before, let $\nu=\sum_{j=m+1}^n v_{m,j}u_j/v_{m,m}$ and write $\nu = \{\nu\}+\lfloor \nu \rfloor$. Since $\{\nu\}=s/q_m$, $s\in \cS \subset \{0,1,\ldots,q_m-1\}$ it follows that $x_m$ and $x'_m$ are disconnected in $\cgraf_m$ iff  $[x_m/v_{m,m}-s/q_m]=[x'_m/v_{m,m}-s/q_m]$  for all $s \in \cS \subset \{0,1,\ldots,q_m-1\}$ or equivalently, $[x_m/v_{m,m}]=[x'_m/v_{m,m}]$ and the value of $s_m$ evaluated using \eqref{eqn:ess} is the same for $x_m$ and $x'_m$. 
 From item~\ref{item:1.2} and \eqref{eqn:btilde}, it follows that $W=(\tilde{U}_m,S_m)$ and hence $R_m \geq \min_{p\in \cP_m} I(W;X_m|\bm{X}_{m^c})= H(\tilde{U}_m,S_m|\bm{X}_{m^c})$.

%
Thus (recall that $S_n=0$)
\begin{equation}
R_{sum} = \sum_{i=1}^nR_i \geq \sum_{i=1}^n H(\tilde{U}_i,S_i|\bm{X}_{i^c}).
\end{equation}
Since the lower bound coincides with the sum rate of the protocol DBP given by \eqref{eqn:sumrateDBP} when the $X_i$ are mutually independent, DBP is optimal.

\subsection{Examples}
\label{sec:rateexamples}
\begin{figure}[htbp] 
   \centering
   \includegraphics[width=3in]{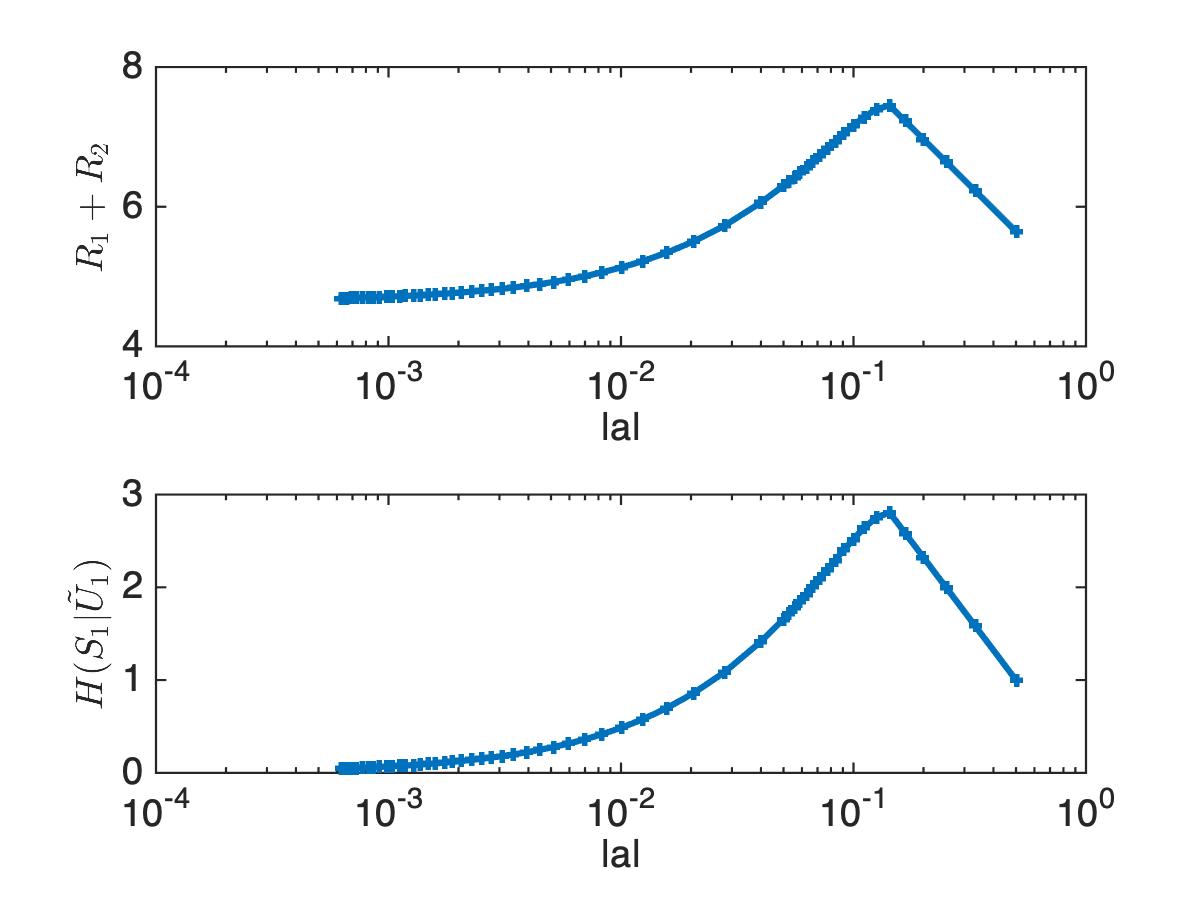} 
   \caption{Communication rates for 2 dimensional lattices and a uniform source distribution over the square $[-5/2,5/2)\times [-5/2,5/2)$. The basis vectors are $(1,0)$ and $(a,b)=(1/m,\sqrt{1-1/m^2})$,  with integer $m\geq 2$.}
   \label{fig:latticerates}
\end{figure}

In 	the following examples, we  illustrate how the method proposed in Th. \ref{thmcost} works,  present a case where the communication cost is  large, and compute communication rates for  a family of two-dimensional lattices, for a uniformly distributed source.
	\begin{example} Consider the three dimensional body-centered cubic (BCC) lattice with basis $\left\{(1,0,0),\right.$ $\left. (-\tfrac{1}{3}, \tfrac{2\sqrt{2}}{3}, 0), (-\tfrac{1}{3}, -\tfrac{\sqrt{2}}{3}, \sqrt{\tfrac{2}{3}}))\right\}.$
The Babai point given by $\bm{u}=(u_1, u_2, u_3),$ is given by
 \begin{eqnarray} 
u_{3} = \left[ \sqrt{\frac{3}{2}} x_{3} \right], ~ ~ ~ u_{2}=\left[ \frac{3}{2\sqrt{2}}x_{2} + \left\{ \frac{1}{2} u_{3} \right\} \right]+\left\lfloor\frac{1}{2} u_{3} \right\rfloor, \nonumber
\end{eqnarray}
\begin{eqnarray*}
\text{and}~ ~ ~ u_{1}&=&\left[ x_1 + \left\{ \frac{1}{3} u_{2} + \frac{1}{3} u_{3} \right\} \right]+\left\lfloor\frac{1}{3} u_{2} + \frac{1}{3} u_{3} \right\rfloor.
\end{eqnarray*}
In order for the Babai point ${\bf u}$ to be correctly calculated at the  CN, nodes $2$ and $1$ send the following extra information, according to the protocol DBP:
\begin{align*}
\text{node 2:} \ \left\{ \frac{1}{2} u_{3} \right\} = \frac{s_2}{q_2}, \ q_2=2 \ \text{then} \ s_2=0 \ \text{or} \ 1 \\
\text{node 1:} \ \left\{ \frac{1}{3} u_{2} + \frac{1}{3} u_{3} \right\} = \frac{s_1}{q_1}, \ q_1=3 \ \text{then} \ s_1=0,1 \ \text{or} \ 2 .
\end{align*}
	Observe that the values of $s_1$ and $s_2$ are calculated for a general received vector ${\bf x}=(x_1, x_2).$ Therefore, the sum rate to send $s_1$ and $s_2$ to the CN is $\log_{2}2 + \log_{2}3 \approx 2.5859 \approx 3 \ \text{bits}.$
\end{example}

\begin{example} Consider a two-dimensional lattice with basis $\{(1,0),(\tfrac{311}{1000}, \tfrac{101}{100}) \}.$ We have that
 \begin{equation}
u_{2}=\left[\frac{x_{2}}{v_{22}} \right]=\left[\frac{100}{101}x_{2} \right]
\end{equation} 
and
\begin{eqnarray*}
u_{1}&=&\left[\frac{x_{1}}{v_{11}} - \left\{\frac{u_{2}v_{21}}{v_{11}} \right\} \right]-\left\lfloor\frac{u_{2}v_{21}}{v_{11}} \right\rfloor = \left[ x_{1} -  \left\{\left[\frac{100}{101}x_{2} \right] \frac{311}{1000} \right\} \right] - \left\lfloor\left[ \frac{100}{101}x_{2} \right] \frac{311}{1000} \right\rfloor.
\end{eqnarray*} 
Consider, for example, $x=(1, 1),$ then $\left\{\left[\frac{100}{101}x_{2} \right] \frac{311}{1000} \right\}=\frac{311}{1000}=\frac{s}{q}.$ In this case, node $1$ must send the largest integer $s_1$ in the range $\{0,1, \dots, 999\}$ for which $\left[x_{1}-\frac{s_1}{q_{1}}\right]=[x_{1}]$ and we get $s_1=500.$ This procedure will cost no larger than $\log_{2}q_{1}=\log_{2}1000 \approx 9.96$ and in the worst case, we need to send almost $10$ bits to recover the Babai point at the CN. 

\end{example}

Communication rates for various two-dimensional lattices are presented in Fig.~\ref{fig:latticerates} for a source uniformly distributed over the square $[-5/2,5/2)\times[-5/2,5/2)$. The basis vectors are $(1,0)$ and $(a,b)$, $a^2+b^2=1$, with $a=1/m$, and integer $m\geq 2$.  The sum rate is seen to peak at $a=1/6$.  Consider the case where $m=991$. Note that $u_2=[x_2/b]$ and $u_1=[x_1-au_2]$. The scaled fractional interference term  $m\{au_2\}$ takes values in $\cS=\{0,1,2,3,988,989,990\}$ which is a much smaller set than $\{0,1,\ldots,990\}$. This observation is essential for ensuring that the conditional entropy $H(S_1|\tilde{U}_1)$ eventually decreases as $a\to 0$.


\section{Error Probability Calculations for Dimensions $n=2,~3$:}
\label{sec4}
We have presented a protocol for computing the  Babai point in a distributed network and evaluated its communication  cost.  We now explore several issues related to the Babai point. 

First, since the Babai point is an approximation for the nearest lattice point, it is of interest to evaluate the probability that the two points are unequal, i.e., the error probability $P_e$ as defined in Sec.~\ref{sec:errprob}. In this section we analyze $P_e$ for the uniform case.  The Gaussian case is presented in a later section. 
Efficient numerical computation of $P_e$ requires that we work with special bases as defined in Sec.~\ref{secMOS}. Analytic and numerical computation of $P_e$ for $n=2,3$ is then addressed in Secs.~\ref{sec-twod} and \ref{secthreed}. Knowledge of the error probability is useful because in some situations it might be sufficient to compute the Babai point, and not incur the extra communication cost of finding the nearest lattice point. We mention here that the additional cost of finding the true nearest lattice point has been addressed in dimension two in~\cite{VB:2017}.

Second, we study the variation of the error probability $P_e$ with the packing density of the lattice. The intuition driving this study is that as the packing density increases, the Voronoi cell become increasingly spherical, and we should expect the error probability to increase. We see that some well-known regular polyhedra lie on the optimal tradeoff curve between the packing density and the error probability. Numerical evidence about the nature of polyhedra that lie on this optimal tradeoff curve is also presented.
	

\subsection{Special Bases: Minkowski and Obtuse Superbase}
\label{secMOS}

	 A basis $\{\bm{v_{1}},\bm{v_{2}},...,\bm{v_{n}}\}$ of a lattice $\Lambda \subset \mathbb{R}^{n}$ is said to be \textit{Minkowski-reduced}
if $\bm{v_{j}},$ $j=1,\dots,n,$ is such that $\left\Vert \bm{v_{j}}\right\Vert \leq\left\Vert \bm{v}\right\Vert $, for any $\bm{v}$ such that $\{\bm{v_{1}},...,\bm{v_{j-1}},\bm{v}\}$ can be extended
to a basis of $\Lambda$. 

\begin{theorem}\cite{conwaysloane} \label{propmink} (\textit{Minkowski-reduced basis from Gram matrix}) Consider the Gram matrix $A$ of a lattice $\Lambda.$ The inequalities from Eq.~(\ref{mink1}), Eqs.~(\ref{mink1})--(\ref{mink2}) and Eqs.~(\ref{mink1})--(\ref{mink3}) below define a Minkowski-reduced basis for dimensions 1,2 and 3, respectively.
\begin{eqnarray}
0 < a_{11}  & \leq & a_{22} ~ \leq ~ a_{33}  \label{mink1} \\ 
2|a_{st}| & \leq & a_{ss} \ \ (s < t) \label{mink2} \\
2|a_{rs} \pm a_{rt} \pm a_{st}| & \leq & a_{rr} + a_{ss} \ \ (r < s < t). \label{mink3}
\end{eqnarray}
\end{theorem}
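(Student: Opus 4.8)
The plan is to pass from the geometric definition of Minkowski reduction given just above to the stated arithmetic inequalities in two stages: first showing the inequalities are \emph{necessary}, then that they are \emph{sufficient}. The bridge between the two formulations is the elementary observation that, writing a lattice vector as $\bm{v}=\sum_{i=1}^n c_i \bm{v_i}$ with $c_i\in\mathbb{Z}$, the set $\{\bm{v_1},\ldots,\bm{v_{j-1}},\bm{v}\}$ extends to a basis of $\Lambda$ if and only if $\gcd(c_j,c_{j+1},\ldots,c_n)=1$. Only the coefficients from index $j$ on matter, since $\bm{v_1},\ldots,\bm{v_{j-1}}$ may be subtracted off freely, and a vector extends to a basis of the quotient lattice $\Lambda/\langle \bm{v_1},\ldots,\bm{v_{j-1}}\rangle$ exactly when its image there is primitive. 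Hence Minkowski reduction is equivalent to the countable family of inequalities
\[
a_{jj}=\|\bm{v_j}\|^2 \le \Big\|\sum_{i=1}^n c_i\bm{v_i}\Big\|^2 \quad\text{whenever } \gcd(c_j,\ldots,c_n)=1,
\]
for $j=1,\ldots,n$, which I will rewrite throughout in terms of the Gram entries $a_{ik}=\bm{v_i}\cdot\bm{v_k}$.

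For \textbf{necessity} I would feed well-chosen primitive test vectors into this family. Taking $\bm{v}=\bm{v_{j+1}}$ at level $j$ gives $a_{jj}\le a_{j+1,j+1}$, which chains to \eqref{mink1}, while positivity $a_{11}>0$ is just $\bm{v_1}\neq 0$. Taking $\bm{v}=\bm{v_t}\pm\bm{v_s}$ with $s<t$ at level $t$ (legitimate, since the coefficient of $\bm{v_t}$ is $1$) gives $0\le a_{ss}\pm 2a_{st}$, i.e.\ the bound $2|a_{st}|\le a_{ss}$ of \eqref{mink2}. Likewise $\bm{v}=\bm{v_t}\pm\bm{v_r}\pm\bm{v_s}$ at level $t$ expands, via $\|\bm{v}\|^2-a_{tt}\ge 0$, into inequalities of the shape $2|a_{rs}\pm a_{rt}\pm a_{st}|\le a_{rr}+a_{ss}$ of \eqref{mink3}; assembling the precise list of sign patterns (using the already-established \eqref{mink1}--\eqref{mink2} to recover the ones a single level does not produce directly) is routine bookkeeping. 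Each such evaluation is a one-line Gram-matrix computation, so this direction is easy in every dimension $\le 3$.

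The substance of the theorem is \textbf{sufficiency}: that the \emph{finitely many} inequalities \eqref{mink1}--\eqref{mink3} already force the entire countable family above. The strategy I would follow is to bound the coefficients of any potential counterexample. Suppose some primitive $\bm{v}=\sum c_i\bm{v_i}$ (with $\gcd(c_j,\ldots,c_n)=1$) had $\|\bm{v}\|^2<a_{jj}$; by the ordering \eqref{mink1} this also puts $\|\bm{v}\|^2$ below every $a_{kk}$ with $k\ge j$. Expanding $\|\bm{v}\|^2=\sum_i c_i^2 a_{ii}+2\sum_{i<k}c_ic_k a_{ik}$ and using $2|a_{ik}|\le a_{ii}\le a_{kk}$ from \eqref{mink1}--\eqref{mink2} to dominate the cross terms by the diagonal terms, one shows that a vector shorter than $\bm{v_j}$ cannot carry any coefficient with $|c_i|\ge 2$; that is, the only candidate minimal vectors have coefficients in $\{-1,0,1\}$. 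Once the search is cut down to $\{-1,0,1\}^n$, only finitely many primitive vectors remain at each level, and each is exactly an instance already guaranteed by \eqref{mink1}--\eqref{mink3} (singletons by \eqref{mink1}, pairs by \eqref{mink2}, triples by \eqref{mink3}).

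This reduction to short coefficient vectors, and the verification that no larger coefficient can occur, is where I expect the real work to lie, and it is the step that genuinely exploits the special structure of dimensions $\le 3$: keeping the list of coefficient patterns short and explicit is exactly what fails once $n$ grows, where additional and more intricate conditions are needed. Since the theorem is attributed to \cite{conwaysloane}, I would finish by cross-checking my finite case list against the classical one to confirm that no sign combination in \eqref{mink3} has been dropped.
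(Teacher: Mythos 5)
Your proposal cannot be compared against an internal argument, because the paper offers no proof of this theorem: it is quoted directly from \cite{conwaysloane}. Measured against the classical reduction-theory argument, your outline is the standard route, and the bridge lemma (that $\{\bm{v_1},\ldots,\bm{v_{j-1}},\bm{v}\}$ extends to a basis iff $\gcd(c_j,\ldots,c_n)=1$) together with your derivations of \eqref{mink1} and \eqref{mink2} are correct.

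There is, however, a genuine gap in the ``routine bookkeeping'' you defer for \eqref{mink3}, and it is not repairable as stated. Your test vectors $\epsilon_r \bm{v_r}+\epsilon_s \bm{v_s}+\bm{v_t}$ yield exactly the four one-sided inequalities $a_{rr}+a_{ss}+2(\epsilon_r\epsilon_s a_{rs}+\epsilon_r a_{rt}+\epsilon_s a_{st})\geq 0$, in which the product of the three signs attached to $a_{rs},a_{rt},a_{st}$ is always $+1$. This family is not closed under negation, so it cannot be repackaged as absolute values with independently chosen signs: the other four one-sided inequalities hidden in $2|a_{rs}\pm a_{rt}\pm a_{st}|\leq a_{rr}+a_{ss}$ do not follow from \eqref{mink1}--\eqref{mink2} (these only give the weaker $2(a_{rs}+a_{rt}+a_{st})\leq 2a_{rr}+a_{ss}$), and they are actually false for Minkowski-reduced bases. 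Concretely, the FCC basis $\bm{v_1}=(1,1,0)$, $\bm{v_2}=(1,0,1)$, $\bm{v_3}=(0,1,1)$ has Gram matrix with $a_{ii}=2$ and $a_{ij}=1$ for $i\neq j$; it is Minkowski-reduced trivially, since each basis vector attains the minimum norm of the lattice, yet $2|a_{12}+a_{13}+a_{23}|=6>4=a_{11}+a_{22}$. So necessity fails for the literal all-sign-patterns reading of \eqref{mink3}; the classical conditions impose only the correlated sign patterns above, and your plan goes through only for that version --- the cross-check against \cite{conwaysloane} that you propose at the end would indeed surface this discrepancy, but it should be resolved inside the proof, not deferred. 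Separately, in the sufficiency direction, your claim that a vector shorter than $\bm{v_j}$ can carry no coefficient with $|c_i|\geq 2$ is asserted via termwise domination of cross terms, but the naive bound $2|c_i c_k a_{ik}|\leq |c_i c_k|\,a_{ii}$ is not by itself sufficient (it is tight in the FCC example above); making this reduction to coefficients in $\{-1,0,1\}$ rigorous is the substantive content of the sufficiency half, and you correctly locate it but do not supply it.
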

All lattices in $\mathbb{R}^n$ have a Minkowski-reduced basis, which  roughly speaking, consists of short vectors that are as  perpendicular as possible~\cite{conwaysloane}. In dimension two, relevant vectors can be determined from a Minkowski-reduced basis as follows.

\begin{lemma}\cite{ConwaySloane:1992} \label{lemmathird}(\textit{Relevant vectors given a Minkowski-reduced basis}) Consider a Minkowski-reduced basis of the form $\{(1,0),(a,b)\}$ and let $\theta$ be the angle between $(1,0)$ and $(a,b)$. Then besides the basis vectors, a third relevant vector is
\begin{equation}
\begin{cases}
(-1+a,b), & \text{if } \frac{\pi}{3} \leq \theta \leq \frac{\pi}{2} \\
(1+a,b), & \text{if } \frac{\pi}{2} < \theta \leq \frac{2\pi}{3}.
\end{cases}
\end{equation}
\end{lemma}

	In dimension two, the characterization~\cite{conwaysloane} for a Minkowski-reduced basis is the following: 
a lattice basis $\left\{ \ensuremath{\bm{v_{1}},\bm{v_{2}}}\right\} $
is Minkowski-reduced if only if $\left\Vert \bm{v_{1}}\right\Vert \leq\left\Vert \bm{v_{2}}\right\Vert $
and $2 |\bm{v_{1}} \cdot \bm{v_{2}}|  \leq \left\Vert \bm{v_{1}}\right\Vert ^{2}.$ 
Consequently, the angle $\theta$ between $\bm{v_{1}}$ and $\bm{v_{2}}$ is such that $\text{ }\frac{\pi}{3} \leq\theta\leq\frac{2\pi}{3}.$ 

We describe next the concept of an obtuse superbase that will be applied in the three-dimensional approach.

	Let $\{\bm{v_1},\bm{v_{2}}, \dots, \bm{v_{n}}\}$ be a basis for a lattice $\Lambda \subset \mathbb{R}^n$. A {\textit{superbase}}   $\{\bm{v_{0}}, \bm{v_{1}}, \dots, \bm{v_{n}}\}$ with $\bm{v_0}= -\sum_{i=1}^{n} \bm{v_{i}},$  is said to be {\textit{obtuse}} if $p_{ij}=\bm{v_{i}} \cdot \bm{v_{j}} \leq 0,$ for $i,j=0,\dots,n, \ \ i \neq j$. A lattice $\Lambda$ is said to be of {\textit Voronoi's first kind} if it has an \textit{obtuse superbase}. The existence of an obtuse superbase allows a characterization of the relevant Voronoi vectors of a lattice \cite[Th.3, Sec. 2]{ConwaySloane:1992}, which are of the form $\sum_{i \in S} \bm{v_{i}},$ where $S$ is a strict non-empty subset of $\{0,1,\dots, n\}.$


	It was demonstrated \cite{ConwaySloane:1992} that all lattices with dimension less or equal than three are of Voronoi's first kind and given the existence of obtuse superbases for three dimensional lattices, their Voronoi regions can be classified into five possible parallelohedra which we  present in the sequel. 
	
	Given an obtuse superbase, since ${\bf v_0}=-{\bf v_1}-{\bf v_2}-{\bf v_3},$ all Voronoi vectors can be  written as one of the following seven vectors or their negatives:
\begin{equation*}
{\bf v_{1}},{\bf v_{2}},{\bf v_{3}},{\bf v_{12}}={\bf v_{1}}+{\bf v_{2}}, {\bf v_{13}}={\bf v_{1}}+{\bf v_{3}}, {\bf v_{23}}={\bf v_{2}}+{\bf v_{3}},{\bf v_{123}}={\bf v_{1}}+{\bf v_{2}}+{\bf v_{3}}.
\end{equation*}
The Euclidean norm of such vectors  $N(v_{1}),N(v_{2}),N(v_{3}),N(v_{12}),N(v_{13}),$ $N(v_{23}), N(v_{123})$ are called \textit{vonorms} and $p_{ij}=-{\bf v_{i}} \cdot {\bf v_{j}}~(0 \leq i < j \leq 3)$ are denoted as \textit{conorms}. 

\begin{remark} \label{remark5type}
	The Voronoi region of a lattice $\Lambda \subset \mathbb{R}^n$ with obtuse superbase $\{{\bf v_0},{\bf v_1}, {\bf v_2},{\bf v_3}\}$ can be classified \cite{ConwaySloane:1992} according to the five choices of zeros for their conorms, which leads to five possible parallelohedra, as presented in Fig.~\ref{5types}. 
	The characterization is based on the conorms as follows:
\begin{itemize}
\item cuboid, if $p_{12} = p_{13} = p_{23} = 0.$
\item hexagonal prism, if only two conorms among $p_{12}, p_{13}$ and $p_{23}$ are zero.
\item rhombic dodecahedron, if only one $p_{12}, p_{13}$ or $p_{23}=0$ and $p_{0j}$ are nonzero for all $j=1,2,3.$
\item hexa-rhombic dodecahedron, if only one $p_{12}, p_{13}$ or $p_{23}=0$ and $p_{0j}=0,$ for $j=1,2,3.$
\item truncated octahedron, if all $p_{ij} ~(0 \leq i < j \leq 3)$ are nonzero.
\end{itemize}	
\end{remark}

%
%
	
\begin{figure}[H]
\begin{center}
		\includegraphics[height=3.2cm]{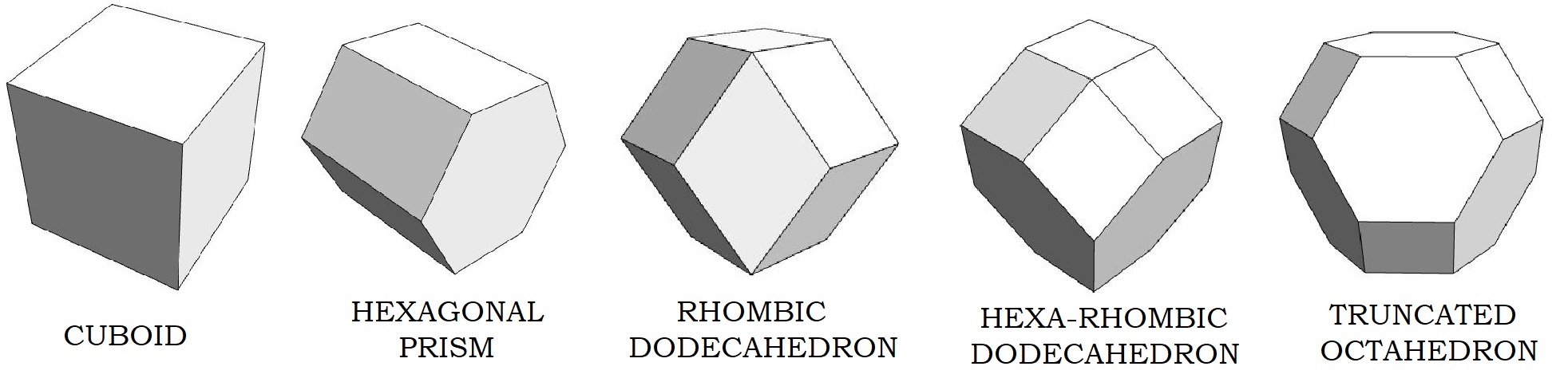}  
\caption{{The five possible shapes for a Voronoi cell of a three-dimensional lattice
}}
 \label{5types}
\end{center}
\end{figure}

	
	Now that the Minkowski-reduced basis and obtuse superbase have been defined, we present a relation between them.
	
\begin{theorem}\label{minkos} (\textit{Minkowski-reduced basis and obtuse superbase})
In dimensions $n=1,2,3$, if a lattice $\Lambda \subset \mathbb{R}^n$ has a  Minkowski-reduced basis $\{\bm{v_1},\ldots,\bm{v_n}\}$, where $\bm{v_i} . \bm{v_j} \leq 0$, $i\neq j$, then the superbase $\{\bm{v_0},\bm{v_1},\ldots,\bm{v_n}\}$ is an obtuse superbase for $\Lambda$. Conversely, if $\Lambda$ has an obtuse superbase, then a Minkowski-reduced basis can be constructed from it.
\end{theorem}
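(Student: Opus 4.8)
The plan is to prove the two implications separately, treating dimensions $n=1,2,3$ uniformly where possible. For the forward implication I would observe that the hypothesis already supplies $\bm{v_i}\cdot\bm{v_j}\le 0$ for $i,j\in\{1,\dots,n\}$, so the only additional inner products to control are those involving $\bm{v_0}=-\sum_{i=1}^n\bm{v_i}$. Expanding gives $\bm{v_0}\cdot\bm{v_j}=-\|\bm{v_j}\|^2-\sum_{i\ne j}\bm{v_i}\cdot\bm{v_j}=-\|\bm{v_j}\|^2+\sum_{i\ne j}|\bm{v_i}\cdot\bm{v_j}|$, so the claim $\bm{v_0}\cdot\bm{v_j}\le 0$ reduces to the single estimate $\sum_{i\ne j}|\bm{v_i}\cdot\bm{v_j}|\le\|\bm{v_j}\|^2$. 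This follows from the Gram-matrix description in Theorem~\ref{propmink}: inequality~\eqref{mink2} gives $|\bm{v_i}\cdot\bm{v_j}|\le\tfrac12\min(\|\bm{v_i}\|^2,\|\bm{v_j}\|^2)\le\tfrac12\|\bm{v_j}\|^2$, and since in dimension at most three each index $j$ has at most two partners $i\ne j$, the sum is bounded by $2\cdot\tfrac12\|\bm{v_j}\|^2=\|\bm{v_j}\|^2$. It is worth noting that only the ordering~\eqref{mink1} and inequality~\eqref{mink2} enter here; the three-dimensional inequality~\eqref{mink3} is not needed for this direction.

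For the converse I would exploit the structure that an obtuse superbase confers, namely that $\Lambda$ is of Voronoi's first kind and its relevant vectors are exactly $\sum_{i\in S}\bm{v_i}$ for proper non-empty $S\subset\{0,1,\dots,n\}$. The central identity is that the squared length of such a relevant vector is a cut weight in the complete graph on $\{0,\dots,n\}$ with non-negative edge weights given by the conorms $p_{ij}=-\bm{v_i}\cdot\bm{v_j}\ge 0$; precisely, using $\sum_{k=0}^n\bm{v_k}=\bm{0}$ one finds $\bigl\|\sum_{i\in S}\bm{v_i}\bigr\|^2=\sum_{i\in S,\,j\notin S}p_{ij}$, and in particular the vonorms are $\|\bm{v_j}\|^2=\sum_{i\ne j}p_{ij}$. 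Since the lengths of all relevant vectors are thereby read off from the conorms, I would construct the basis by the successive-minima (greedy) definition of Minkowski reduction: take $\bm{w_1}$ a shortest non-zero lattice vector, then $\bm{w_2}$ a shortest vector extending $\{\bm{w_1}\}$ to a basis, and, for $n=3$, $\bm{w_3}$ a shortest vector completing a basis. The point is that in dimension at most three these successive minima are attained by relevant vectors, so each $\bm{w_k}$ can be selected from the explicit list, making the construction effective from the superbase data; that the output is Minkowski-reduced is then immediate from the definition.

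The remaining work is to confirm that the relevant vectors genuinely furnish the required short vectors and that they can be completed to a lattice basis. In dimension two this is transparent: relabelling the superbase so that $\bm{v_0}$ is longest, the pair $\{\bm{v_1},\bm{v_2}\}$ is a basis, and $\|\bm{v_0}\|^2=\|\bm{v_1}\|^2+\|\bm{v_2}\|^2+2\,\bm{v_1}\cdot\bm{v_2}\ge\|\bm{v_2}\|^2$ forces $2|\bm{v_1}\cdot\bm{v_2}|\le\|\bm{v_1}\|^2$, which together with the norm ordering is exactly the dimension-two characterization of Minkowski reduction (and recovers the third relevant vector of Lemma~\ref{lemmathird}). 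The genuine obstacle is dimension three, and it is real: a single large conorm $p_{ij}$ makes the $2$-$2$ cut vector $\bm{v_i}+\bm{v_j}$ strictly shorter than every superbase vector, so the reduced basis must in general be drawn from the relevant vectors rather than from the superbase itself. My approach here would be to organize the selection around the conorm vanishing patterns of Remark~\ref{remark5type}: in the cuboid and hexagonal-prism cases the shortest relevant vectors and a basis among them are immediate, whereas the rhombic dodecahedron, hexa-rhombic dodecahedron, and especially the truncated-octahedron case (all $p_{ij}>0$) require comparing the cut weights carefully and checking that three chosen relevant vectors are unimodular, i.e. form a genuine basis. Equivalently, one may verify the defining inequalities~\eqref{mink1}--\eqref{mink3} directly after translating them into relations among the non-negative conorms via the cut identity; dispatching these type by type is where I expect the bulk of the effort to lie.
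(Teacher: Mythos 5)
Your forward implication is the paper's own argument, phrased uniformly in $n$: expand $\bm{v_0}\cdot\bm{v_j}=-\Vert\bm{v_j}\Vert^2+\sum_{i\neq j}|\bm{v_i}\cdot\bm{v_j}|$, bound each cross term by $\tfrac12\Vert\bm{v_j}\Vert^2$ via \eqref{mink2}, and use that $j$ has at most two partners when $n\leq 3$; that part is complete, as is your two-dimensional converse, which is also the paper's (relabel so $\bm{v_0}$ is longest and extract $2|\bm{v_1}\cdot\bm{v_2}|\leq\Vert\bm{v_1}\Vert^2$ from $\Vert\bm{v_2}\Vert\leq\Vert\bm{v_0}\Vert=\Vert\bm{v_1}+\bm{v_2}\Vert$). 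The genuine gap is the three-dimensional converse: the cut-weight identity and the greedy selection among relevant vectors are the right scaffolding, but the decisive content --- exhibiting, for each conorm pattern of Remark~\ref{remark5type}, three relevant vectors that form a basis and verifying \eqref{mink1}--\eqref{mink3} for them --- is announced rather than executed, and the auxiliary facts you lean on (that in dimension $\leq 3$ the successive minima are attained by relevant vectors that extend to a basis) are asserted without proof or citation. As it stands, your converse for $n=3$ is a program, not a proof.

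It is worth recording, though, that the obstacle you flagged is real, and it is exactly the point the paper's own proof elides. The paper orders the superbase so that $\Vert\bm{v_1}\Vert\leq\Vert\bm{v_2}\Vert\leq\Vert\bm{v_3}\Vert\leq\Vert\bm{v_0}\Vert$, takes the first three vectors, and ``starts from'' $\bm{v_2}\cdot\bm{v_2}\leq(\bm{v_1}+\bm{v_2})\cdot(\bm{v_1}+\bm{v_2})$ and its analogues; in two dimensions this follows from the ordering because $\bm{v_1}+\bm{v_2}=-\bm{v_0}$, but for $n=3$ it is, by your cut identity, equivalent to what is being proved, since $\Vert\bm{v_1}+\bm{v_2}\Vert^2-\Vert\bm{v_2}\Vert^2=\Vert\bm{v_1}\Vert^2-2p_{12}$. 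And it can fail: take conorms $p_{01}=p_{13}=0$, $p_{02}=p_{12}=p_{23}=1$, $p_{03}=2$, so that the Gram matrix of $\{\bm{v_1},\bm{v_2},\bm{v_3}\}$ is
\begin{equation*}
A=\begin{pmatrix} 1 & -1 & 0 \\ -1 & 3 & -1 \\ 0 & -1 & 3 \end{pmatrix},
\end{equation*}
which is positive definite and yields an obtuse superbase with squared norms $(\Vert\bm{v_1}\Vert^2,\Vert\bm{v_2}\Vert^2,\Vert\bm{v_3}\Vert^2,\Vert\bm{v_0}\Vert^2)=(1,3,3,3)$; yet $2|\bm{v_1}\cdot\bm{v_2}|=2>1=\Vert\bm{v_1}\Vert^2$ and $\Vert\bm{v_1}+\bm{v_2}\Vert^2=2<3=\Vert\bm{v_2}\Vert^2$, and no relabeling of the superbase helps: the two triples containing both $\bm{v_1},\bm{v_2}$ fail as above, and the two containing both $\bm{v_0},\bm{v_3}$ fail since $2p_{03}=4>3$. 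A Minkowski-reduced basis here is $\{\bm{v_1},\bm{v_1}+\bm{v_2},\bm{v_3}\}$ --- a $2$--$2$ cut vector displaces a superbase vector, precisely the phenomenon you predicted (with one small correction: a large $p_{12}$ makes $\bm{v_1}+\bm{v_2}$ shorter than $\bm{v_1}$ and $\bm{v_2}$, not necessarily than all four superbase vectors, whose other norms do not involve $p_{12}$). So your route, once the type-by-type verification is actually supplied, would not merely be an alternative to the paper's converse but a repair of it.
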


\begin{proof}
The case $n=1$ is trivial, hence we will start with $n=2.$ 

\noindent $(\Rightarrow)$ Suppose that $\{\bm{v_{1}},\bm{v_{2}}\}$ is a Minkowski-reduced basis, then, according to Th. \ref{propmink}, $0< \bm{v_{1}}\cdot \bm{v_1} \leq \bm{v_2} \cdot \bm{v_2}$ and $2|\bm{v_1} \cdot \bm{v_2}| \leq \bm{v_{1}} \cdot \bm{v_1}.$ Moreover, by hypothesis, $\bm{v_1} \cdot \bm{v_2} \leq 0.$ Define $\bm{v_0}=-\bm{v_1}-\bm{v_2}$ and to guarantee that $\{\bm{v_0},\bm{v_1},\bm{v_2}\}$ is an obtuse superbase, we need to check that $p_{01} \leq 0$ and $p_{02} \leq 0.$ Indeed,
$
p_{01}= \bm{v_0} \cdot \bm{v_1} = (-\bm{v_1} -\bm{v_2}) \cdot \bm{v_1} = -\bm{v_1} \cdot \bm{v_1} \underbrace{-\bm{v_1} \cdot \bm{v_2}}_{|\bm{v_1} \cdot \bm{v_2}|} \leq -2|\bm{v_1} \cdot \bm{v_2}| + |\bm{v_1} \cdot \bm{v_2}| \leq 0.
$
Similarly we have that $p_{02} \leq 0.$
	
\noindent $(\Leftarrow)$ If $\{\bm{v_0},\bm{v_1},\bm{v_2}\}$ is an obtuse superbase, any permutation of it is also an obtuse superbase. So, we may consider one such that $|\bm{v_1}| \leq |\bm{v_2}| \leq |\bm{v_0}|.$ Then we have that $0 < \bm{v_1} \cdot \bm{v_1} \leq \bm{v_{2}} \cdot \bm{v_{2}} \leq (\bm{v_1}+\bm{v_2}) \cdot (\bm{v_1}+\bm{v_2})$ and $\bm{v_{1}} \neq 0.$ From the last inequality, we have that $ -2\bm{v_{1}} \cdot \bm{v_{2}} \leq  \bm{v_{1}} \cdot \bm{v_{1}} \Rightarrow 2 |\bm{v_{1}} \cdot \bm{v_{2}}| \leq  \bm{v_{1}} \cdot \bm{v_{1}}.$\\

\noindent For n=3:
 $(\Rightarrow)$ Consider a Minkowski-reduced basis $\{\bm{v_1},\bm{v_{2}},\bm{v_{3}}\}$ such that $\bm{v_{1}}\cdot \bm{v_{2}} \leq 0, \bm{v_{1}} \cdot \bm{v_{3}} \leq 0$ and $\bm{v_{2}} \cdot \bm{v_{3}} \leq 0.$ To check if $\{\bm{v_{0}},\bm{v_1},\bm{v_2},\bm{v_{3}}\}$ is an obtuse superbase, we need to verify that $p_{01} \leq 0, p_{02} \leq 0$ and $p_{03} \leq 0.$ One can observe that
\[
p_{01}= \bm{v_0} \cdot \bm{v_{1}} = -\bm{v_1} \cdot \bm{v_1} \underbrace{-\bm{v_1} \cdot \bm{v_2}}_{|\bm{v_{1}} \cdot \bm{v_{2}}|} \underbrace{-\bm{v_1} \cdot \bm{v_3}}_{|\bm{v_1} \cdot \bm{v_3}|} \leq  -\bm{v_1} \cdot \bm{v_1} +  \frac{\bm{v_1} \cdot \bm{v_1}}{2} + \frac{\bm{v_1} \cdot \bm{v_1}}{2} \leq 0.
\]
With analogous arguments, we show that $p_{02} \leq 0$ and $p_{03} \leq 0.$
	
\noindent $(\Leftarrow)$ To prove the converse, up to a permutation, we may consider an obtuse superbase such that $|\bm{v_{1}}| \leq |\bm{v_2}| \leq |\bm{v_3}| \leq |\bm{v_{0}}|.$ This basis will be Minkowski-reduced if we prove conditions ($\ref{mink2}$) and ($\ref{mink3}$) from Th. \ref{propmink}, i.e.,
\begin{equation} \label{ine1}
2|\bm{v_1} \cdot \bm{v_2}| \leq \bm{v_1} \cdot \bm{v_1}; \ \  2|\bm{v_1} \cdot \bm{v_3}| \leq \bm{v_1} \cdot \bm{v_1}; \ \  2|\bm{v_2} \cdot \bm{v_3}| \leq \bm{v_2} \cdot \bm{v_2},
\end{equation}
\begin{equation} \label{ine2}
2|\pm \bm{v_1} \cdot \bm{v_2} \pm \bm{v_{1}} \cdot \bm{v_{3}} \pm \bm{v_{2}} \cdot \bm{v_{3}}| \leq \bm{v_1} \cdot \bm{v_1}+ \bm{v_2} \cdot \bm{v_2}.
\end{equation}

	The inequalities in Eq. \eqref{ine1} are shown similarly to the two dimensional case starting from $\bm{v_{2}} \cdot \bm{v_{2}} \leq (\bm{v_1}+\bm{v_2}) \cdot (\bm{v_1}+\bm{v_2}),$ $\bm{v_{3}} \cdot \bm{v_{3}} \leq (\bm{v_1}+\bm{v_3}) \cdot (\bm{v_1}+\bm{v_3})$ and $\bm{v_{3}} \cdot \bm{v_{3}} \leq (\bm{v_2}+\bm{v_3}) \cdot (\bm{v_2}+\bm{v_3}).$ Starting from $\bm{v_{3}} \cdot \bm{v_{3}} \leq (\bm{v_1} + \bm{v_2} +\bm{v_3}) \cdot (\bm{v_1} + \bm{v_2} + \bm{v_3}),$ the inequality in Eq. \eqref{ine2} follows, concluding the proof.  
\end{proof}

	Characteristics of Voronoi vectors of low-dimensional lattices can be found in \cite{May95}. For our application, the obtuse superbase (\cite[Th.3, Sec. 2]{ConwaySloane:1992}) leads to considerable simplification in identifying all the relevant vectors for a Voronoi cell. For more details about low dimensional reduced bases, see \cite{nguyen04}. Computation of a Minkowski-reduced basis in high dimensions is a hard problem and the basis commonly used in practice is an approximation, obtained using the  the LLL algorithm\cite{lll}. 

%

\subsection{Error Probability and Packing Density: Two-dimensional lattices, Uniform Distribution}
\label{sec-twod}
	We consider that a Minkowski-reduced lattice basis, which is also obtuse (Th. \ref{minkos}) can be chosen by the designer of the lattice code and it can be transformed into an equivalent basis $\{(1,0),(a,b)\},$ by applying QR decomposition to the lattice generator matrix. 
	
	From the Minkowski-reduced basis $\{(1,0),(a,b)\},$ where $a^{2}+b^{2} \geq 1$ and $-\frac{1}{2} \leq a \leq 0,$ it is possible to use Lem.~\ref{lemmathird} to describe the Voronoi region of $\Lambda$ and determine its intersection with the associated Babai cell. Observe that the area of both regions must be the same and in this specific case, equal to $|b|.$ 

	In addition $\{(-1-a,-b),(1,0),(a,b)\}$ is an obtuse superbase for $\Lambda,$ so the relevant vectors that defines the Voronoi region are $\pm (1,0), \pm (a,b)$ and $\pm (-1-a,-b).$ We will choose for the analysis proposed in Thm.~\ref{thmmain} only the vectors in the first quadrant, i.e., $(1,0),(1+a,b), (a,b),$ due to the symmetry of the Voronoi cell. Hence, the following result states a closed formula for the error probability $P_e:=\prob{\bm{X}_{np} \neq \bm{X}_{nl}}$ 
of any two-dimensional lattice.
	
\begin{theorem} \cite{BVC:2017} \label{thmmain} (\textit{Error probability for two-dimensional lattices}) Consider a lattice $\Lambda \subset\mathbb{R}^{2}$ with a Minkowski-reduced basis $\{\bm{v_1},\bm{v_2}\}=\{(1,0),(a,b)\},$ such that the angle $\theta$ between $\bm{v_{1}}$ and $\bm{v_{2}}$ satisfies \textup{$\frac{\pi}{2}\leq\theta\leq \frac{2\pi}{3}$.} The error probability $P_e,$ when the received vector $\bm{x}=(x_1, x_2) \in \mathbb{R}^2$ is uniformly distributed over the Babai cell, is 
\begin{equation}
P_e=F(a, b)=\frac{-a-a^{2}}{4b^{2}}=\frac{1-(1+2a)^{2}}{16b^{2}}.
\end{equation} 
\end{theorem}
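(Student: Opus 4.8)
The plan is to turn $P_e$ into a ratio of planar areas and evaluate it by elementary geometry. First I would read off the Babai cell from the nearest-plane recursion \eqref{eqn:babaicoeff} applied to the upper-triangular generator matrix with diagonal entries $(1,b)$ and off-diagonal entry $a$: the level set $\bm{u}=(0,0)$ is the axis-aligned rectangle $\cB(0)=[-\tfrac12,\tfrac12]\times[-\tfrac{b}{2},\tfrac{b}{2}]$, of area $b$. Since $\vol{\cV(0)}=\vol{\cB(0)}=b$ (the Voronoi volume equals the lattice volume $|\det V|=b$), the success probability in the uniform case is $P_c=\vol{\cV(0)\cap\cB(0)}/b$, so that $P_e=\vol{\cB(0)\setminus\cV(0)}/b$. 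Everything then reduces to measuring the part of the rectangle that sticks out of the Voronoi cell.

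Next I would describe $\cV(0)$ explicitly. By Lem.~\ref{lemmathird}, for $\tfrac{\pi}{2}\le\theta\le\tfrac{2\pi}{3}$ (equivalently $-\tfrac12\le a\le 0$) the relevant vectors are $\pm(1,0)$, $\pm(a,b)$ and $\pm(1+a,b)$, so $\cV(0)$ is the hexagon cut out by the six perpendicular-bisector lines $\bm{x}\cdot\bm{v}=\tfrac12\,\bm{v}\cdot\bm{v}$. The bisectors of $\pm(1,0)$ are exactly the vertical lines $x_1=\pm\tfrac12$, i.e.\ the left and right edges of $\cB(0)$ already lie on $\partial\cV(0)$; hence no mass leaves the rectangle through its vertical sides, and all of the error arises at the top and bottom edges. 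I would then locate, for the top, the two slanted bisectors of $(a,b)$ and $(1+a,b)$: they meet the vertical edges $x_1=\pm\tfrac12$ at the common height $h=\tfrac{a+a^2+b^2}{2b}$, cross the top edge $x_2=\tfrac{b}{2}$ at $x_1=\tfrac{a}{2}$ and $x_1=\tfrac{1+a}{2}$ respectively, and meet each other at a peak of height $\tfrac{b^2-a(1+a)}{2b}>\tfrac{b}{2}$.

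The key geometric observation is that since $a(1+a)\le 0$ one has $h<\tfrac{b}{2}$, so the hexagon's top boundary dips below the rectangle's top edge near the two upper corners while rising above it in the middle. Consequently $\cB(0)\setminus\cV(0)$ meets the upper half-plane in exactly two right triangles: a left one with legs $\tfrac{b}{2}-h=\tfrac{-(a+a^2)}{2b}$ and $\tfrac{a+1}{2}$, and a right one with legs $\tfrac{b}{2}-h$ and $-\tfrac{a}{2}$. Summing their areas gives a top contribution $\tfrac{-(a+a^2)(a+1)}{8b}+\tfrac{a^2(1+a)}{8b}=\tfrac{-a-a^2}{8b}$, and the point symmetry $\bm{x}\mapsto-\bm{x}$ of both cells doubles this, so $\vol{\cB(0)\setminus\cV(0)}=\tfrac{-a-a^2}{4b}$. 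Dividing by $b$ yields $P_e=\tfrac{-a-a^2}{4b^2}$, and the one-line identity $-4(a+a^2)=1-(1+2a)^2$ rewrites this as $\tfrac{1-(1+2a)^2}{16b^2}$.

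I expect the only real obstacle to be bookkeeping in the geometry rather than any hard estimate: one must correctly determine which bisector governs which corner (the peak sits at the off-center abscissa $x_1=\tfrac12+a$, so the two error triangles are \emph{not} congruent), and verify that the middle portion of the top edge genuinely lies inside $\cV(0)$ while only the corner slivers lie outside it. Once the tent-shaped top boundary is pinned down, the remaining steps are routine triangle-area computations; as sanity checks I would confirm the two boundary cases $a=0$ (orthogonal basis, $P_e=0$) and $a=-\tfrac12$ (hexagonal lattice $A_2$, $P_e=\tfrac{1}{12}$).
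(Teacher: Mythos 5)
Your proposal is correct and takes essentially the same route as the paper's proof (sketched in the text, with details in \cite{BVC:2017}): the error probability is the area of the four corner triangles between the rectangular Babai cell $[-\tfrac12,\tfrac12]\times[-\tfrac{b}{2},\tfrac{b}{2}]$ and the Voronoi hexagon determined via Lemma~\ref{lemmathird}, normalized by the cell area $b$. Your computed edge heights $h=\tfrac{a+a^2+b^2}{2b}$ and crossing points match the Voronoi vertices $\left(\pm\tfrac12,\pm\tfrac{a^2+b^2+a}{2b}\right)$ used in the paper, and your triangle-area bookkeeping reproduces $F(a,b)=\tfrac{-a-a^2}{4b^2}$ exactly.
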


\begin{proof} We are going to present just the main idea of the proof. A detailed version is available in \cite[Thm. 1]{BVC:2017}. According to Lemma \ref{lemmathird}, Fig. \ref{rvectors}, we can find the vertices of the Voronoi cell $\cV(0)$, which are: $\pm(\frac{1}{2},\frac{a^{2}+b^{2}+a}{2b})$,
$\pm(-\frac{1}{2},\frac{a^{2}+b^{2}+a}{2b})$ and $\pm(\frac{2a+1}{2},\frac{-a^{2}+b^{2}-a}{2}),$ while the Babai cell $\cB(0)$ has vertices $(\pm \frac{1}{2}, \pm \frac{b}{2}).$ 

\begin{figure}[H]
\begin{center}
		\includegraphics[scale=0.26]{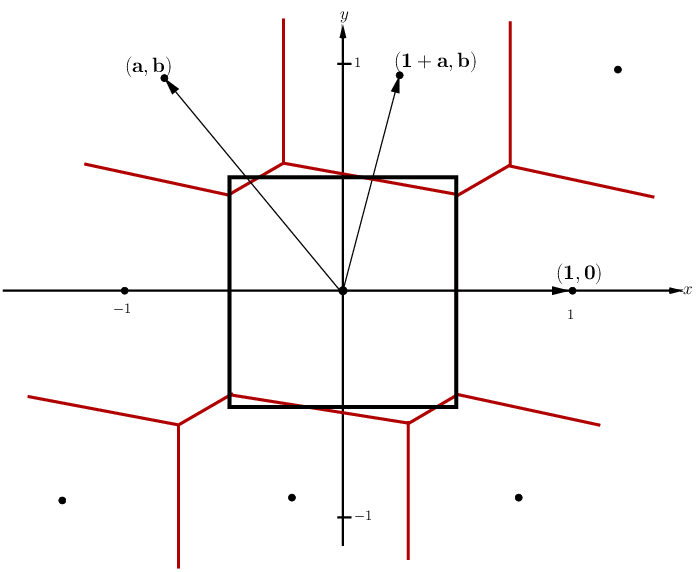}  
\caption{{Voronoi cell, Babai cell and three relevant vectors}}
\label{rvectors}
\end{center}
\end{figure}

From Fig. \ref{rvectors}, the error probability is calculated as the sum of the areas of four `error' triangles normalized by the area of a Babai cell. The explicit formula for it is $F(a, b) =  \dfrac{1}{4} \dfrac{-a-a^{2}}{b^2}.$
\end{proof}	



		
\begin{corollary} \label{corope} (\textit{Error probability analysis for two dimensional lattices}) For any two dimensional lattice with a Minkowski-reduced basis satisfying the conditions of  Thm.~\ref{thmmain}, we have 
\begin{equation}
0 \leq P_{e} \leq \frac{1}{12},
\end{equation}
and
\begin{itemize}
\item[a)] $P_{e}=0 \Longleftrightarrow a=0,$ i.e., the lattice is orthogonal.
\item[b)] $P_{e}=\frac{1}{12} \Longleftrightarrow (a,b)=\left(-\frac{1}{2}, \frac{\sqrt{3}}{2}\right),$ i.e., the lattice is equivalent to the hexagonal lattice.
\item[c)] the level curves of $P_{e}$ are described as ellipsoidal arcs (Fig.~\ref{contour}) in the region $a^{2}+b^{2} \geq 1$ and $-\frac{1}{2} \leq a \leq 0$ (condition required for the basis to be Minkowski-reduced).
\end{itemize}
\end{corollary}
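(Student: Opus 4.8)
The plan is to work entirely from the closed-form expression $P_e=F(a,b)=\frac{-a-a^2}{4b^2}$ established in Theorem~\ref{thmmain}, together with the admissibility constraints $-\frac{1}{2}\leq a\leq 0$, $a^2+b^2\geq 1$ and $b>0$ that characterize a Minkowski-reduced basis of the form $\{(1,0),(a,b)\}$ in the stated angular range. Every assertion reduces to elementary sign analysis of low-degree polynomials, so the argument will be computational rather than conceptual.

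First I would dispose of the lower bound and part (a) together. Factoring the numerator as $-a-a^2=-a(1+a)$, on the range $-\frac{1}{2}\leq a\leq 0$ one has $-a\geq 0$ and $1+a\geq \frac{1}{2}>0$, so the numerator is nonnegative and $P_e\geq 0$. Moreover $P_e=0$ forces $-a(1+a)=0$, and since $1+a$ cannot vanish on this interval, this happens exactly when $a=0$, i.e.\ when the basis is orthogonal, giving part (a).

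For the upper bound and part (b) I would reduce $P_e\leq\frac{1}{12}$ to a polynomial inequality. Since $b^2>0$, the bound is equivalent to $3(-a-a^2)\leq b^2$. Using the packing constraint $b^2\geq 1-a^2$, it suffices to verify $3(-a-a^2)\leq 1-a^2$, which rearranges to $2a^2+3a+1\geq 0$, i.e.\ $(2a+1)(a+1)\geq 0$; both factors are nonnegative on $-\frac{1}{2}\leq a\leq 0$, so the full chain of inequalities holds. Tracking equality is the one place requiring care: attaining $P_e=\frac{1}{12}$ forces equality in \emph{both} steps, namely $b^2=1-a^2$ (the basis lies on the unit circle) together with $(2a+1)(a+1)=0$, whence $a=-\frac{1}{2}$ and then $b=\frac{\sqrt{3}}{2}$. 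This is precisely the hexagonal lattice, establishing part (b); any deviation from either condition yields $P_e<\frac{1}{12}$.

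Finally, for part (c) I would fix a level $P_e=c$ with $0<c\leq\frac{1}{12}$ and rewrite $\frac{-a-a^2}{4b^2}=c$ as $a^2+a+4cb^2=0$. Completing the square in $a$ gives $\left(a+\frac{1}{2}\right)^2+4cb^2=\frac{1}{4}$, the equation of an ellipse centered at $\left(-\frac{1}{2},0\right)$ in the $(a,b)$-plane; its intersection with the admissible region is the claimed ellipsoidal arc, while the degenerate level $c=0$ collapses to the segment $a=0$, consistent with part (a). The main obstacle throughout is the bookkeeping of the equality cases in the upper-bound chain; the inequalities themselves are immediate once the problem is reduced to the factored quadratic $(2a+1)(a+1)$.
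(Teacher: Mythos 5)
Your proof is correct, and it follows what is essentially the paper's (implicit) route: the paper states Corollary~\ref{corope} as a direct consequence of the closed form $P_e=\frac{-a-a^2}{4b^2}$ from Thm.~\ref{thmmain} analyzed over the Minkowski region $-\frac{1}{2}\leq a\leq 0$, $a^2+b^2\geq 1$, which is exactly your sign analysis of $-a(1+a)$, the factored inequality $(2a+1)(a+1)\geq 0$ with its equality cases, and the completed square $\left(a+\tfrac{1}{2}\right)^2+4cb^2=\tfrac{1}{4}$ for the elliptical level curves. Your careful tracking that $P_e=\frac{1}{12}$ forces equality in both links of the chain (hence $b^2=1-a^2$ and $a=-\tfrac{1}{2}$) is the one nontrivial point, and you handle it correctly.
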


\begin{figure}[h!]
\begin{center}
		\includegraphics[height=6cm]{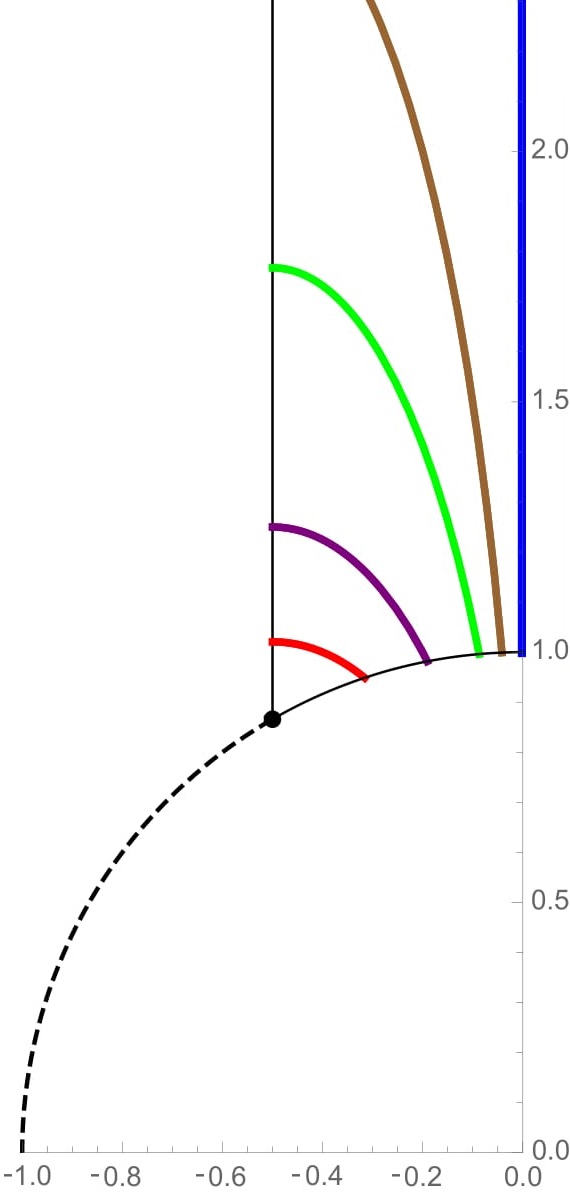}  
\caption{{Level curves of $P_{e}=k,$ in right-left ordering, for $k=0, k=0.01, k=0.02, k=0.04, k=0.06$ and $k=1/12 \approx 0.0833.$ Notice that $a$ is represented in the horizontal axis and $b$ in vertical axis. }}
\label{contour}
\end{center}
\end{figure}

\begin{remark}\label{remark_ep_pd} From Corollary \ref{corope},	one can notice a straightforward relation between the packing density of the lattice and its error probability. The packing density of a lattice with basis $\{(1,0),(a,b)\}$ is given by $\Delta_2(a,b)=\tfrac{\pi}{4b}$ and $F(a,\Delta_2)=\frac{{\Delta^{2}_2}[1-(1+2a)^{2}]}{\pi^{2}},$ following the notation from Th. \ref{thmmain}. For a fixed $a,$ the error probability increases with $\Delta_2,$ and for a fixed density $\Delta_2$ and fixed $b,$ the error probability is decreasing with $a,$ where $-\tfrac{1}{2} \leq a \leq \min\left\{  -\sqrt{1-\left( \frac{\pi}{4\Delta_{2}}\right)^2},0\right\}.$
	
	Indeed, if we consider the error probability for a given density $\Delta_2$, we have that $F(a,\Delta_2)$ is minimized by $a=a^*$, where
\begin{eqnarray}
a^*=\left\{ \begin{array}{cc} 0, &  \Delta_2 \leq \frac{\pi}{4} \ \ \ \  (b^2 \geq 1) \nonumber \\
                       -\sqrt{1-\left( \frac{\pi}{4\Delta_{2}}\right)^2},  & \frac{\pi}{4} < \Delta_2 \leq \frac{\pi}{2\sqrt{3}} \  \ \ \ (3/4 \leq b^2 < 1). \end{array} \right.
\end{eqnarray}
and maximized by $a= -\frac{1}{2},$ for any $\Delta_2.$ Fig.~\ref{fig:packden2} represents the minimum error probability function $F(a, \Delta_2)$ for $\frac{\pi}{4} \leq \Delta_2 \leq \frac{\pi}{2\sqrt{3}}$ and expresses how the error probability varies with the packing density $\Delta_2.$

\begin{figure}[h!] 
   \centering
   \includegraphics[width=3.0in]{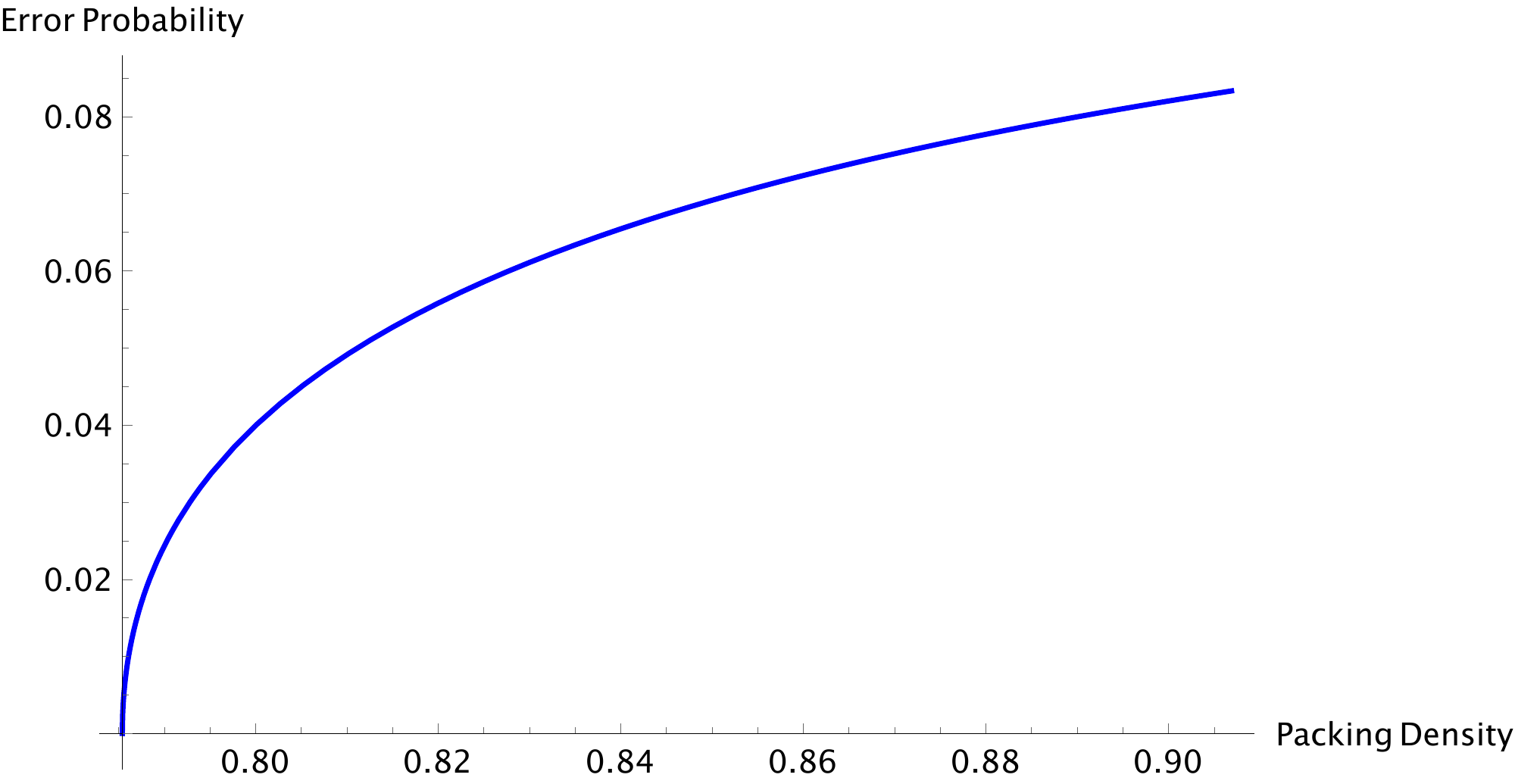} 
   \caption{Minimum error probability for given packing density assuming $\frac{\pi}{4} < \Delta_2 \leq \frac{\pi}{2\sqrt{3}},$ considering a uniform distribution}
   \label{fig:packden2}
\end{figure}
\end{remark}


\subsection{Error Probability and Packing Density: Three-dimensional lattices,  Uniform Distribution}
\label{secthreed}
	For the three dimensional case, we developed and implemented an algorithm in the software \textit{Wolfram Mathematica, version $12.1$}~\cite{Wolfram} which calculates the error probability of any three dimensional lattice, given an obtuse superbase, by following the characterization given in \cite{ConwaySloane:1992}. We assume an initial upper triangular lattice basis given by $\{(1,0,0),(a,b,0),(c,d,e)\},$ where $a,b,c,d,e \in \mathbb{R}.$ 
	
	It is important to remark that in dimensions greater than two, the error probability is dependent on the basis ordering. Hence, in order to analyze the smallest error probability for a given lattice, we relax the ordering imposed for the Minkowski-reduced basis and allow any permutation of a basis from now on. Our algorithm searches over all orderings and determines the best one. As an example, the performance of the BCC lattice is invariant over basis ordering, due to its symmetries. On the other hand, for the FCC lattice, depending on how the basis is ordered, we can find two different error probabilities, $0.1505$ and $0.1667,$ but we choose to tabulate the smallest one. A detailed description of the algorithm is presented in Alg.~\ref{alg}. 

\begin{algorithm}[h!] 
\caption{Error probability and packing density computation, $n=3$,  for basis $\{(1,0,0),(a,b,0),(c,d,e)\}$.} 
\label{alg}
\begin{algorithmic}
\vspace{0.2cm}


\item[] {\bf Voronoi cell:} Given an obtuse superbase, determine the vertices of the  Voronoi cell $\cV(0)$ of $\Lambda$  
using the  Voronoi vectors (Sec.~\ref{secMOS}). Use \textit{ConvexHullMesh[]} available in  Mathematica~\cite{Wolfram} to obtain the convex hull of the vertices of $\cV(0)$. \\

\item[] {\bf Babai cell:} Determine the vertices of the Babai cell $\cB(0)$. 
Apply function \textit{ConvexHullMesh[]} to compute the convex hull of these vertices. \\ 

\item[] {\bf Intersection:} Apply \textit{RegionIntersection[]} in Mathematica~\cite{Wolfram},  to compute $\cB(0)\bigcap \cV(0)$ and  its volume normalized by the volume of the lattice.\\
	
\item[] {\bf Packing density:} Calculate the packing density $\Delta_3=\frac{\pi}{6} \frac{d_{\min}^3(\Lambda)}{\text{vol} (\Lambda)}.$

\end{algorithmic}
\end{algorithm}

	For lattices with randomly chosen basis, we start by considering a basis at random, with the format $\{(1,0,0),$ $(a,b,0),(c,d,e)\},$ where $a,c \in [-1/2,0]$ and $b,d,e \in [-2,2]$ (the choice of the range is justified because we are only interested in lattices whose packing density is greater than $0.4$). Then, the program tests if this basis is an obtuse superbase. If this condition is false, another random basis is generated until a suitable one is found. At the end of this stage, we will have a randomly chosen obtuse and Minkowski-reduced superbase for the lattice $\Lambda.$ 
	
	Fig. \ref{fig-hp} has points given by known lattices, together with random points (orange) that are associated with lattices having a packing density greater than $0.4.$  Note that with overwhelming probability, all orange points with a randomly chosen basis have a truncated octahedron as Voronoi region, which is the most general Voronoi region in three dimensions.
\vspace{-1cm}
\begin{figure}[h]
\centering
\begin{minipage}{.35\linewidth}
  \includegraphics[height=4cm]{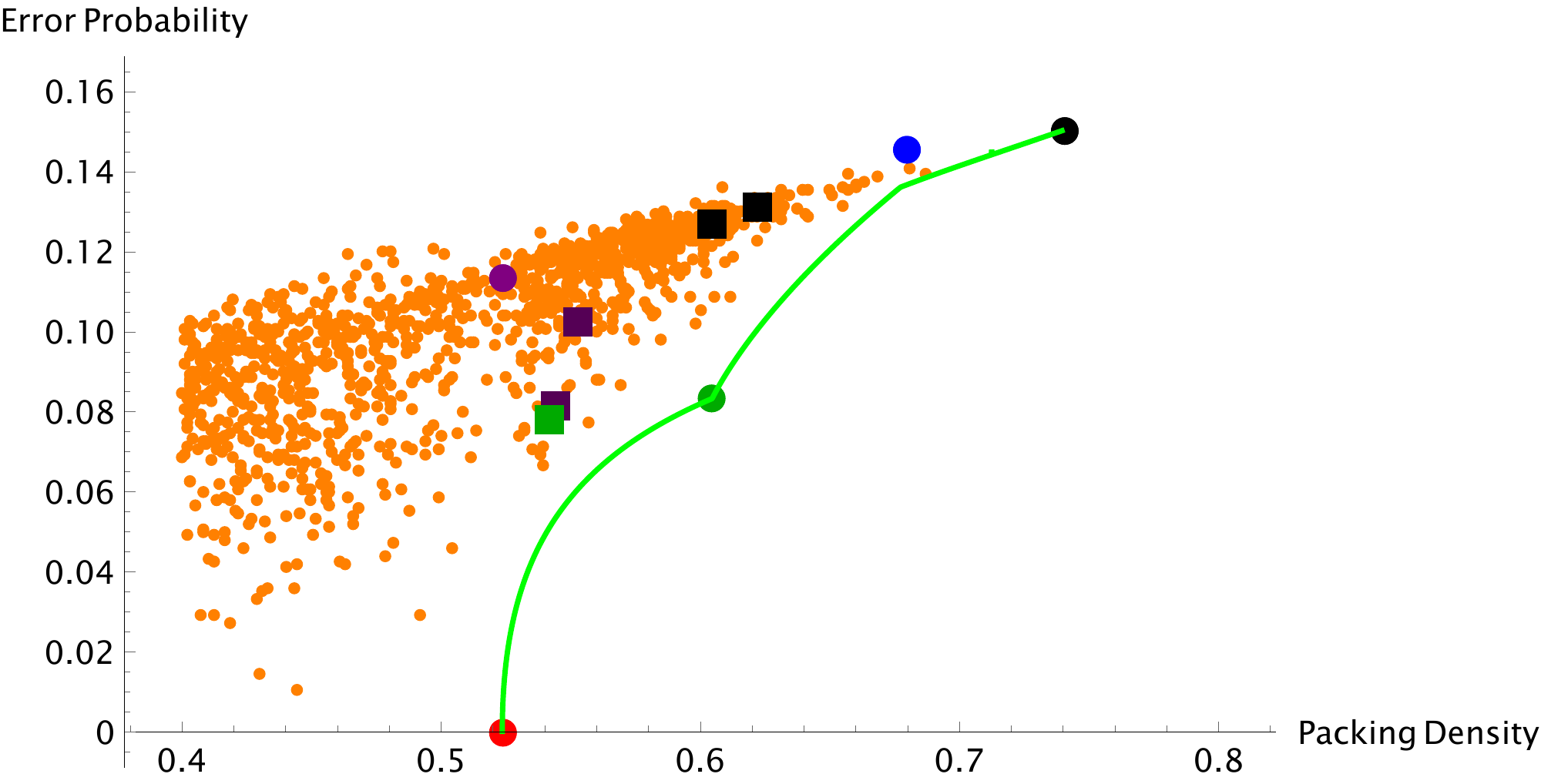}
\end{minipage}
\hspace{.15\linewidth}
\begin{minipage}{.35\linewidth}
  \includegraphics[width=\linewidth]{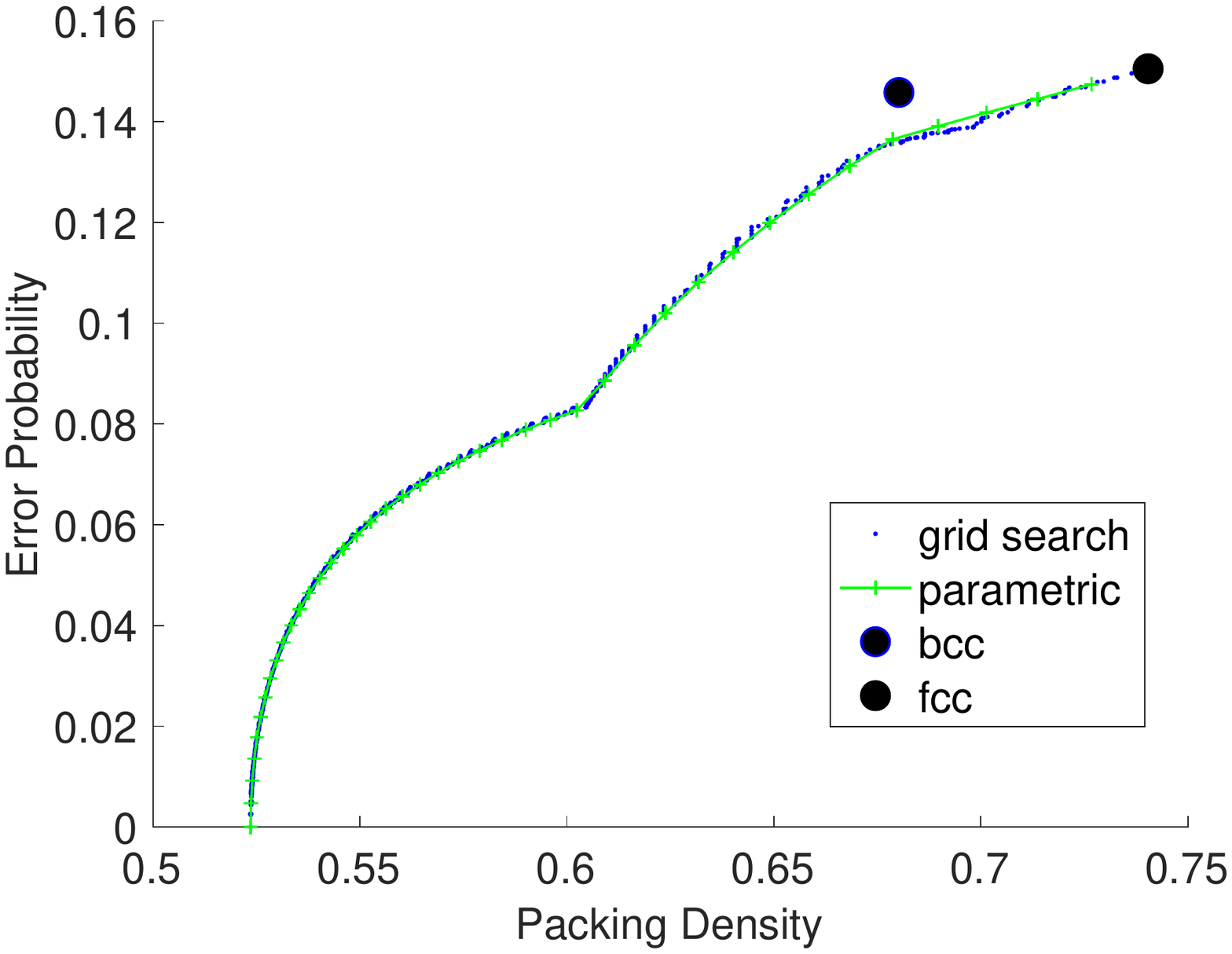}
\end{minipage}
\vspace{-1cm}
\caption{Plot of error probability and packing density for $n=3$, (left) known and randomly chosen (orange points) lattices,  (right) best points obtained from a grid search and the parametric representation.}
\label{fig-hp}
\end{figure}


	The circular points in Fig.~\ref{fig-hp} are respectively described as:~in \textbf{\textcolor{red}{red}}, the cubic lattice $\mathbb{Z}^{3}$ with basis $\{(1,0,0),(0,1,0),(0,0,1)\};$ in \textbf{\textcolor{green}{green}}, the lattice  $\Lambda_{hp}$ with basis $\{(1,0,0),$ $(-\frac{1}{2}, -\frac{\sqrt{3}}{2}, 0),(0, 0, 1)\},$ whose Voronoi region is a regular hexagonal prism; in \textbf{\textcolor{blue}{blue}}, the body-centered cubic lattice with basis $\{(1,0,0),(-\frac{1}{3},\frac{2 \sqrt{2}}{3},0),$ $(-\frac{1}{3},-\frac{\sqrt{2}}{3},\sqrt{\frac{2}{3}})\}$, whose Voronoi region is a truncated octahedron; in \textbf{black}, the face-centered cubic lattice with basis $\{(1,0,0),(-\frac{1}{2},-\frac{1}{2},\frac{1}{\sqrt{2}}),$ $(0,1,0)\},$ whose Voronoi region is a rhombic dodecahedron; in \textbf{\textcolor{purple}{purple}}, the lattice $\Lambda_{hrd}$ with basis $\{(1,0,0),$ $(-\frac{1}{\sqrt{5}}, \frac{2}{\sqrt{5}},0),(0,-\frac{1}{2}, \frac{\sqrt{5}}{2})\},$ whose Voronoi region is a hexa-rhombic dodecahedron. Table \ref{lp} summarizes their performances when we run Alg.~\ref{alg}. 		
	
	

{\small \begin{table}[h!] 
\begin{minipage}{\columnwidth}
\caption{Performance in Algorithm~\ref{alg} for known lattices} \label{lp}
\centering
 \begin{tabular}{ | c | c | c | c | }  
  \hline\noalign{\smallskip} 
   Lattice/Voronoi cell & Notation $15.6$~\cite{conwaysloane} & $\Delta_{3}$ & $P_{e}$\\
  \noalign{\smallskip}\hline\noalign{\smallskip}
  $\mathbb{Z}^3$/ Cuboid & $1_{}1_{}1$& 
  $0.5235$ & $0$ \\
   $\Lambda_{hp}$/ Hexagonal prism & $2_{-1}2_{}2$ & 
   $0.6046$ & $0.0833$ \\
   FCC/ Rhombic dodecahedron & $2_{1}2_{1}2$ & 
   $0.7404$ &  $0.1505$   \\
  $\Lambda_{hrd}$/ Hexa-rhombic dodecahedron & $2_{1}3_{1}2$ & 
  $0.5235$ & $0.1134$ \\
   BCC/ Truncated octahedron & $3_{1}3_{1}3_{-1}$ & 
   $0.6802$ & $0.1458$  \\
   \noalign{\smallskip}\hline
\end{tabular}
\end{minipage}
\end{table}	 }

	Fig.~\ref{fig-hp} also presents some particular cases (square points), where the numerical random search led to a Voronoi region different than the general truncated octahedron. The color corresponds to the cell type, i.e., \textbf{\textcolor{green}{green}} is an hexagonal prism, \textbf{\textcolor{purple}{purple}} are hexa-rhombic dodecahedrons, and \textbf{black} represents rhombic dodecahedrons.

\subsection{Some Observations and Analysis of the Data} 



Let $P_e$ and $\Delta_3$ be the error probability and packing density for a lattice $\Lambda$. Consider the curve $P_e^*(\Delta)$, the lower boundary of the set of points   $(\Delta_3,P_e)$  obtained by minimizing $P_e$ subject to the constraint $\Delta_3 \geq \Delta$. Our interest is in finding a parametric form for the  three-dimensional lattices that achieve points on this boundary. Observe that $P_e^*(\Delta)=0$, for $\Delta\leq \pi/6$, where $\pi/6$  is the packing density for the cubic lattice in three dimensions. In fact lattices with densities strictly smaller than $\pi/6$ and error probability equal to zero can be obtained by rectangular (i.e. cuboidal) lattices. However, since $P_e=0$ is already achieved at the packing density $\pi/6$, we need only consider  $\Delta$  in the range $\left[ \pi/6, \pi/(3\sqrt{2})\right]$, where $\pi/(3\sqrt{2})$ is the packing density of the FCC lattice, the lattice with the highest packing density in three dimensions.
It turns out that  a parametric form can be given, which closely approximates $P_e^*(\Delta)$, and coincides with it over a range of packing densities. This parametric form is obtained by placing some constraints on the parameters in the family of well-rounded lattices (defined in the sequel). 

\textit{Strongly well-rounded lattices},  are defined as lattices having a basis consisting of  vectors of minimum norm,  which  in our context is equal to $1$. 
Well-rounded lattices have been studied generally \cite{DamirF:2019},~\cite{McMullen:2005}, and also for
applications such as coding for wiretap Gaussian and fading channels~\cite{Damiretal:2020, Gnilkeetal: 2016}. 

%
%

The bases for the family of well-rounded lattices can be written as $\{(1,0,0),$ $(-\cos \alpha, \sin \alpha, 0),$ $(-\sin \beta \cos \gamma,-\sin \beta \sin \gamma,\cos \beta)\}$, with $-1/2 \leq -\cos \alpha \leq 0$, $-1/2\leq -\sin \beta \cos \gamma \leq 0$ and $-1/2 \leq \sin \beta \cos (\alpha + \gamma) \leq 0$. These bases are in Minkowski reduced form, and satisfy the superbase constraint. 
It turns out that $\Lambda(\beta)$, the  well-rounded lattice parameterized by $\beta$ with $\alpha=\pi/2$  and
\begin{equation}
\sin \gamma=\left\{\begin{array}{cc}0, & 0 \leq \beta < \pi/6,\\ \frac{1}{2\sin \beta}, & \pi/6 \leq \beta \leq \pi/4, \end{array} \right.
\label{eqn:parametric3}
\end{equation}
leads to a curve  which closely approximates  $P_e^*(\Delta)$.

Error probability -- packing density curves, obtained using the above parameterization, as well as a grid search, are plotted in the right hand panel in Fig.~\ref{fig-hp}. We have the following observations.
\begin{enumerate}
\item For $0\leq \beta \leq \pi/6$, $\Lambda(\beta)$ has basis $\{(1,0,0),(0,1,0),(-\sin \beta, 0, \cos \beta)\}.$ The  packing density $\Delta(\beta)=\pi/(6\cos \beta)$, varies between $\pi/6$ (cubic lattice)  and $\pi/(3\sqrt{3})$ (hexagonal lattice). The error probability is the same as for the two dimensional case and is given by $P_e=(1-(1+2\sin\beta)^2)/(16\cos^2 \beta)$, which is an increasing function of $\beta$ and lies in the range $[0,1/12]$. The Voronoi cell is a cube for $\beta=0$, a regular hexagonal prism for $\beta=\pi/6$ and an irregular hexagonal prism for $0<\beta < \pi/6$. From Fig.~\ref{fig-hp} it is evident that the parameterization is optimal for this range of $\beta$ values. It is interesting that there is no truly 3 dimensional Voronoi cell that is  is able to do better in this range.

\item For $\pi/6 \leq \beta \leq \pi/4$, $\Lambda(\beta)$ has basis $\{(1,0,0),(0,1,0),(-\sqrt{\sin^2 \beta -1/4}, -1/2, \cos \beta)\}.$
The  packing density $\Delta(\beta)=\pi/(6\cos \beta)$, varies between $\pi/(3\sqrt{3})$  and $\pi/(3\sqrt{2})$ (FCC). The error probability is   an increasing function of $\beta$ and lies in the range $[1/12,0.1505]$. The Voronoi cell is a hexarhombic dodecahedron for $\pi/6<\beta < \pi/4$ and a rhombic dodecahedron  for $\beta=\pi/4$. The parameterization coincides with $P_e^*(\beta)$ for only part of this range of $\beta$ values, but is a close approximation to $P_e^*(\Delta)$ over this entire range.
\end{enumerate}

 We also present an	 interesting comparison to a value listed  in
Tab.~\ref{lp}. Specifically, the lattice with basis  $\{(1,0,0),(0,1,0),(-\sqrt{17/108},-1/2,\sqrt{16/27})\}$
has the same volume and consequently the same packing density as the BCC lattice (whose Voronoi region is a truncated octahedron), but has error probability   $0.1368$ which is smaller than $0.1458$ achieved by the BCC lattice.

 At least in dimension $n=3$,  we have numerical evidence that when the packing density is small enough to be obtained by a prism, a prism is optimal.  An natural question is whether this observation holds  for dimensions greater than $3$, i.e. do prisms achieve  points on $P_e^*(\Delta)$ in higher dimensions, when $\Delta$ is small enough. The resolution of this is left as future work, since it will require the development of alternative analytic methods.

\section{Error Probability Estimation for Higher Dimensions}
\label{sec5}
Direct error probability calculations become increasingly difficult as the lattice dimension grows---we have already seen an example of this in going from $n=2$ to $n=3$ dimensions. Further, no parameterizations of lattices in very large dimensions are known, which makes it difficult to examine the tradeoff between the packing density and the error probability $P_e$. Thus it is more fruitful to obtain bounds using tools from probability theory, when $n$ becomes large. We first study the error probability under uniform probability distributions in Sec.~\ref{secunif} and under Gaussian distributions in Sec.~\ref{secgauss}.

\subsection{Uniform Distributions}
\label{secunif}
We need a few definitions. Let $S(r)$ be the Euclidean ball (sphere) of radius $r$ in $\mathbb{R}^n$ centered at the origin. 
~The Babai cell of a lattice with a given basis is a hyperrectangle with sides of length $a_i>0$, $i=1,2,\ldots,n$ and we say that the Babai cell has size ${\bf a}=(a_1,a_2,\ldots,a_n) = (|v_{11}|,|v_{22}|, \ldots, |v_{nn}|),$ where $V$ is the upper triangular generator matrix of $\Lambda.$

Note that in this section we primarily work with $P_c=1-P_e$.

\begin{theorem}(\textit{A Chebyshev Bound})
Suppose lattice $\Lambda\subset \mathbb{R}^n$  has covering radius $r_{\text{cov}}$, a Babai cell of size ${\bf a}=(a_1,a_2,\ldots,a_n)$, and satisfies 
\begin{equation}
\frac{1}{12}\sum_{i=1}^na_i^2 > r_{\text{cov}}^2.
\label{eqn:condcheb}
\end{equation}
Then, for the uniformly distributed case, 
\begin{equation}
P_c=\mbox{Prob}(\bm{X}\in {\mathcal V}(0)\bigcap {\mathcal B}(0)|\bm{X} \in \cV(0)) \leq \frac{1}{180n^2} \frac{\sum_{i=1}^n a_i^4}{\delta^2}
\end{equation}
where $\delta=\frac{1}{n}\left(\frac{1}{12}\sum_{i=1}^na_i^2 -r_{\text{cov}}^2 \right).$
\end{theorem}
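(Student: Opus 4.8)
The plan is to recast $P_c$ as a lower-tail probability for the squared norm of a point drawn uniformly from the Babai cell, and then control that tail with Chebyshev's inequality. The first step is to note that $\cV(0)$ and $\cB(0)$ both have volume $\vol{\Lambda}=\prod_{i=1}^n a_i$, so the two natural conditionings coincide:
\[
P_c=\prob{\bm{X}\in\cB(0)\mid \bm{X}\in\cV(0)}=\frac{\vol{\cV(0)\cap\cB(0)}}{\vol{\Lambda}}=\prob{\bm{X}\in\cV(0)},
\]
where in the final expression $\bm{X}$ is taken uniform over $\cB(0)$. Because the upper-triangular basis makes $\cB(0)$ an axis-aligned hyperrectangle centered at the origin, under this law the coordinates are independent with $X_i\sim\mathrm{Unif}[-a_i/2,a_i/2]$. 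It is precisely this independence that makes the variance computation below tractable, which is why I would work with the Babai-cell law rather than the Voronoi-cell law.

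The second step is to exploit the covering radius: by definition every point of $\cV(0)$ lies within distance $r_{\text{cov}}$ of the origin, so $\cV(0)\subseteq S(r_{\text{cov}})$ and therefore
\[
P_c\le\prob{\|\bm{X}\|^2\le r_{\text{cov}}^2}.
\]
Setting $Y=\|\bm{X}\|^2=\sum_{i=1}^n X_i^2$, the uniform moments give $E[X_i^2]=a_i^2/12$, hence $\mu:=E[Y]=\tfrac{1}{12}\sum_{i=1}^n a_i^2$. Hypothesis \eqref{eqn:condcheb} is exactly $\mu>r_{\text{cov}}^2$, and by the definition of $\delta$ we have $\mu-r_{\text{cov}}^2=n\delta>0$, so $\{Y\le r_{\text{cov}}^2\}=\{Y\le\mu-n\delta\}$ is a genuine lower-tail event.

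The final step is to apply Chebyshev's inequality,
\[
\prob{Y\le\mu-n\delta}\le\prob{|Y-\mu|\ge n\delta}\le\frac{\mathrm{Var}(Y)}{(n\delta)^2},
\]
and evaluate the variance. By independence $\mathrm{Var}(Y)=\sum_{i=1}^n\mathrm{Var}(X_i^2)$, and from $E[X_i^4]=a_i^4/80$ together with $E[X_i^2]=a_i^2/12$ one gets $\mathrm{Var}(X_i^2)=a_i^4/80-a_i^4/144=a_i^4/180$. Summing yields $\mathrm{Var}(Y)=\tfrac{1}{180}\sum_{i=1}^n a_i^4$, and substituting into the Chebyshev bound gives exactly $P_c\le\tfrac{1}{180n^2}\sum_{i=1}^n a_i^4/\delta^2$.

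I do not anticipate a serious obstacle. The only two points that need care are the interchangeability of the two conditionings (which rests on the equal volumes of $\cV(0)$ and $\cB(0)$ and lets me use the clean independent-coordinate law) and the containment $\cV(0)\subseteq S(r_{\text{cov}})$, which is just the definition of the covering radius. Everything else is the elementary fourth-moment arithmetic producing the constant $1/180$, with \eqref{eqn:condcheb} serving only to guarantee $\delta>0$ so that the lower-tail estimate is not vacuous.
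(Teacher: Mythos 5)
Your proof is correct and is essentially the paper's own argument: both bound $P_c$ by $\prob{\bm{X}\in S(r_{\text{cov}})}$ for $\bm{X}$ uniform on the Babai cell with independent coordinates, compute $\mathrm{Var}(X_i^2)=a_i^4/180$, and apply Chebyshev, the hypothesis \eqref{eqn:condcheb} guaranteeing $\delta>0$. The only differences are cosmetic --- you work with the unnormalized sum $\sum_i X_i^2$ where the paper uses the average $\frac{1}{n}\sum_i X_i^2$, and you make explicit the equal-volume justification for swapping the two conditionings, which the paper leaves implicit.
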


\begin{proof}
Note that $\mbox{Var}{X_i}=a_i^2/12$ and the $X_i$ are mutually independent. Let $\mu =(1/12n)\sum_{i=1}^n a_i^2$.  It follows that the event
	\begin{eqnarray}
	\left\{X \in S(r_{\text{cov}})\right\}   =   \left\{\frac{1}{n}\sum_{i=1}^nX_i^2 \leq \frac{r_{\text{cov}}^2}{n}\right\} 
	\subset \left\{\left|\frac{1}{n}\sum_{i=1}^nX_i^2 - \mu \right| > \underbrace{\mu-\frac{r_{\text{cov}}^2}{n}}_{\delta}\right\} .
	\end{eqnarray} 
	 Since $\mbox{Var}(X_i^2) = E[X_i^4]-E[X_i^2]^2 = {a_i^4}/{180}$, it follows by an application of the Chebyshev inequality that
	\begin{equation}
	\mbox{Prob}(X\in {\mathcal V}(0)) \leq \mbox{Prob}\left\{X \in S(r_{\text{cov}})\right\}  \leq \frac{\mbox{Var}((1/n)\sum_{i=1}^nX_i^2)}{\delta^2} = \frac{\sum_{i=1}^n a_i^4}{180n^2 \delta^2}.
	\end{equation}
\end{proof}

As an application of the theorem, consider the Barnes-Wall lattice $\Lambda_{16}\subset \mathbb{R}^{16}$ whose generator matrix is given in Fig. 4.10~\cite{conwaysloane}. From the generator matrix which is in lower triangular form, the Babai cell has size $(4,2^{(10)},1^{(5)})$  and the covering radius is known to be $\sqrt{3}$~\cite{conwaysloane}. An application of the above theorem gives $\mbox{Prob}(X\in {\mathcal V}(0)) \leq 0.539$. For the Leech lattice $\Lambda_{24}$, the size of the Babai cell is $(8, 4^{(11)}, 2^{(11)},1)$ and the covering radius is $\sqrt{2}$ which gives $\mbox{Prob}(X\in {\mathcal V}(0)) \leq 0.0833$. We also note that the theorem cannot be used for the lattice $E_8$, using the generator matrix given in~\cite{conwaysloane}, since the condition \eqref{eqn:condcheb} is not satisfied.

Unfortunately, the method does not apply to the family of lattices $A_n$. 
$A_n$ has generator matrix in square form given by $V_{A_n} = I_n + \frac{c_n}{n}J_n,$ where $I_n$ is the  $n \times n$ identity matrix, $c_n=-1 \pm \sqrt{n+1}$ and $J_n$ is  $n \times n$ the matrix of ones~\cite{kim2010}. From this fact, we can determine the size of the Babai cell, i.e., the numbers $a_1, \dots, a_n,$ which are the the diagonal elements of the upper triangular matrix $R$ obtained through QR decomposition. Hence,
\begin{equation*}
a_1=r_{11} = \sqrt{2}, ~ ~ a_2=r_{22}=\sqrt{\frac{3}{2}}, ~ ~ a_3=r_{33}=\sqrt{\frac{4}{3}}=\frac{2}{\sqrt{3}}.
\end{equation*}
	
	If we move forward with this process, we get that the $k-th$ side of the Babai cell is $a_k=\sqrt{\frac{k+1}{k}},$ for any $k=1, \dots, n.$ Observe that the condition from Eq.~\eqref{eqn:condcheb} is not satisfied for this lattice. Indeed, 
\begin{equation}\label{eq:harmonic}
\frac{1}{12} \sum_{i=1}^{n} a_i^2 = \frac{1}{12} \sum_{i=1}^{n} \left( 1+\frac{1}{i} \right),
\end{equation}
and $r_{\text{cov}} = \dfrac{1}{\sqrt{2}} \left( \dfrac{2 \cdot \lfloor \tfrac{n+1}{2} \rfloor \left( n+1-\lfloor \tfrac{n+1}{2} \rfloor \right)}{n+1} \right)^{1/2}$ \cite[p. 109]{conwaysloane}. 
By considering the approximation for partial finite sum of the harmonic series together with Eq.~\eqref{eq:harmonic}, it is valid that
\begin{equation}
 \frac{1}{12} \sum_{i=1}^{n} \left( 1+\frac{1}{i} \right) \approx \frac{1}{12}(n+\log(n)+1) < r_{\text{cov}}^2, ~ ~ ~ \text{for all} ~ n.
\end{equation}

\begin{theorem} ({\textit Exclusion Bound}) For a lattice $\Lambda$ with covering radius is $r_{\text{cov}}$, suppose that a Babai cell has size $\bm{a}=(a_1, a_2,  \dots, a_n)$ which satisfies $a_1 \geq a_2 \geq  \ldots \geq a_m > 2r_{\text{cov}} \geq a_{m+1} \geq \ldots a_n$. Then 
\begin{equation}
P_c=\mbox{Prob}(\bm{X}\in {\mathcal V}(0)\bigcap {\mathcal B}(0)|\bm{X} \in \cV(0)) \leq \frac{(2r_{\text{cov}})^m}{\prod_{i=1}^m a_i}.
\end{equation}
When $m=0$, the bound is unity.
\end{theorem}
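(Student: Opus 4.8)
The plan is to reduce $P_c$ to a probability under an independent-coordinate (product) law and then exploit the fact that membership in the covering ball forces a coordinate-wise constraint. First I would observe that, since both the Voronoi cell $\mathcal{V}(0)$ and the Babai cell $\mathcal{B}(0)$ are fundamental regions of $\Lambda$, they share the same volume $\mathrm{vol}(\Lambda)=\prod_{i=1}^n a_i$. Consequently the conditioning on $\{\bm{X}\in\mathcal{V}(0)\}$ in the statement can be replaced, without changing the value of $P_c$, by taking $\bm{X}$ uniformly distributed over the Babai cell $\mathcal{B}(0)=\prod_{i=1}^n[-a_i/2,a_i/2]$; under that model $P_c=\mathrm{Prob}(\bm{X}\in\mathcal{V}(0))$, exactly as in the proof of the Chebyshev bound.

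Next I would use the covering radius to enclose the Voronoi cell: by definition $\mathcal{V}(0)\subseteq S(r_{\text{cov}})$, so $P_c \le \mathrm{Prob}(\bm{X}\in S(r_{\text{cov}}))$. The key step is the coordinate-wise implication: if $\bm{x}\in S(r_{\text{cov}})$ then $\sum_j x_j^2 \le r_{\text{cov}}^2$, which forces $|x_i|\le r_{\text{cov}}$ for every $i$. Restricting attention to the first $m$ coordinates and using the independence of the $X_i$ under the uniform product law gives
\begin{equation*}
\mathrm{Prob}(\bm{X}\in S(r_{\text{cov}})) \le \mathrm{Prob}\Big(\bigcap_{i=1}^m\{|X_i|\le r_{\text{cov}}\}\Big) = \prod_{i=1}^m \mathrm{Prob}(|X_i|\le r_{\text{cov}}).
\end{equation*}

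Finally I would evaluate each factor. For $i\le m$ we have $a_i>2r_{\text{cov}}$, hence $r_{\text{cov}}<a_i/2$ and the interval $[-r_{\text{cov}},r_{\text{cov}}]$ sits strictly inside $[-a_i/2,a_i/2]$, so $\mathrm{Prob}(|X_i|\le r_{\text{cov}})=2r_{\text{cov}}/a_i$. Multiplying these yields $P_c \le (2r_{\text{cov}})^m/\prod_{i=1}^m a_i$, and the empty-product convention delivers the bound $1$ when $m=0$. I do not anticipate a genuine obstacle here; the one point that needs care is justifying that the coordinates with $i>m$ contribute factors equal to $1$ (because then $a_i/2 \le r_{\text{cov}}$, so $[-a_i/2,a_i/2]\subseteq[-r_{\text{cov}},r_{\text{cov}}]$) and may therefore be dropped, which is precisely what makes the threshold $2r_{\text{cov}}$ and the ordering of the $a_i$ the correct hypotheses. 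An alternative, purely geometric route---bounding $\mathrm{vol}(\mathcal{V}(0)\cap\mathcal{B}(0))\le\mathrm{vol}(S(r_{\text{cov}})\cap\mathcal{B}(0))$ and truncating the box directly---gives the identical estimate and could replace the probabilistic phrasing if preferred.
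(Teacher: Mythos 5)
Your proof is correct and is essentially the paper's own argument in probabilistic dress: the paper encloses $\mathcal{V}(0)$ in $S(r_{\text{cov}})$ and cuts the Babai box with the planes $\pm r_{\text{cov}}\bm{e}_i$ for $i=1,\ldots,m$, which is exactly your coordinate-wise events $\{|X_i|\le r_{\text{cov}}\}$ under the uniform product law on $\mathcal{B}(0)$, and your observation that the coordinates $i>m$ contribute factors equal to $1$ matches the paper's rule for stopping the cuts after the $m$th pair. The only cosmetic difference is normalization --- the paper assumes $\det \Lambda = 1$ without loss of generality, while you divide by $\mbox{vol}(\mathcal{B}(0))=\prod_{i=1}^n a_i$ directly.
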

\begin{proof}
Without loss of generality assume that $\det \Lambda = 1$. The idea is to cut off parts of the Babai rectangle which are outside the sphere $S(r_{\text{cov}})$, starting with cutting planes $\pm r_{\text{cov}} \bm{e}_1\in \mathbb{R}^n$, where $\bm{e}_1=(1,0,\ldots,0)$. After the $i$th pair of cuts $\pm r_{\text{cov}}\bm{e}_i$, we are left with a smaller rectangle of size $(r_{\text{cov}},\ldots,r_{\text{cov}},a_{i+1},\ldots,a_n)$ which intersects $S(r_{\text{cov}})$. We stop after the $m$th pair of cuts, for then every face of the remaining rectangle intersects the interior of $S(r_{\text{cov}})$. The volume of the remaining rectangle is the desired upper bound on the probability. Thus
\begin{eqnarray}
P_c \leq  (2r_{\text{cov}})^m a_{m+1}\ldots a_n = \frac{(2r_{\text{cov}})^m}{a_1a_2\ldots a_m},
\end{eqnarray}
where in the last step we have used $a_1a_2\ldots a_n=\det \Lambda =1$.
\end{proof}
For the Barnes-Wall $\Lambda_{16}$ and Leech $\Lambda_{24}$ lattices, the corresponding values of $P_c$ are $0.866$ and $0.0078$ respectively. Similar to the Chebyshev bound, the Exclusion bound gives only a trivial result for the lattice $E_8$.

In fact the two bounds can sometimes be combined.
\begin{corollary}(\textit{Exclusion and Chebyshev bounds})
Suppose $m$ is defined as in the Exclusion bound and that $\delta_1=\frac{1}{12}\sum_{i=m+1}^na_i^2-r_{\text{cov}}^2(1-m/3) >0$. Then
$$P_c  \leq  \frac{(2r_{\text{cov}})^m}{\prod_{i=1}^m a_i} \left(\frac{m(2r_{\text{cov}})^4+\sum_{i=m+1}^na_i^4}{180 n^2 \delta_1^2} \right).$$
\end{corollary}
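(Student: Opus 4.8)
The plan is to peel off the $m$ long sides of the Babai cell with the Exclusion argument and then apply the Chebyshev bound to the truncated cell that remains. As in the two preceding proofs, I would normalize $\det\Lambda=1$ and reduce to estimating $\prob{\bm{X}\in S(r_{\text{cov}})}$ with $\bm{X}$ uniform on the Babai cell $\cB(0)=\prod_{i=1}^n[-a_i/2,a_i/2]$, using $\cV(0)\subseteq S(r_{\text{cov}})$ so that $P_c\leq \prob{\bm{X}\in S(r_{\text{cov}})}$.

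The decisive observation is that the sphere event nests inside the box produced by the Exclusion cuts: $\sum_{i=1}^n X_i^2\leq r_{\text{cov}}^2$ forces $|X_i|\leq r_{\text{cov}}$ for every $i$, in particular for the $m$ coordinates with $a_i>2r_{\text{cov}}$. Setting $B=\{|X_i|\leq r_{\text{cov}},\ i=1,\ldots,m\}$, we therefore have $\{\bm{X}\in S(r_{\text{cov}})\}\subseteq B$, so that $\prob{\bm{X}\in S(r_{\text{cov}})}=\prob{B}\,\prob{\bm{X}\in S(r_{\text{cov}})\mid B}$ exactly. By the product form of the uniform law on $\cB(0)$ and the inequality $a_i>2r_{\text{cov}}$, the first factor is $\prob{B}=\prod_{i=1}^m (2r_{\text{cov}}/a_i)=(2r_{\text{cov}})^m/\prod_{i=1}^m a_i$, which is precisely the Exclusion-bound prefactor.

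For the conditional factor I would note that conditioning on $B$ merely truncates the first $m$ coordinates: given $B$, the $X_i$ stay independent, with $X_i$ uniform on $[-r_{\text{cov}},r_{\text{cov}}]$ for $i\leq m$ and on $[-a_i/2,a_i/2]$ for $i>m$. This is again a uniform distribution on a hyperrectangle, now with side lengths $b_i=2r_{\text{cov}}$ for $i\leq m$ and $b_i=a_i$ for $i>m$, so the Chebyshev bound applies verbatim to this truncated cell. Its hypothesis $\tfrac{1}{12}\sum_i b_i^2>r_{\text{cov}}^2$ reduces, after substituting $b_i^2=4r_{\text{cov}}^2$ for $i\leq m$, to $\tfrac{1}{12}\sum_{i=m+1}^n a_i^2-r_{\text{cov}}^2(1-m/3)=\delta_1>0$, which is exactly the stated hypothesis; the $(1-m/3)$ term appears because each of the $m$ replaced coordinates contributes variance $(2r_{\text{cov}})^2/12=r_{\text{cov}}^2/3$ in place of $a_i^2/12$. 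The fourth-moment numerator likewise becomes $m(2r_{\text{cov}})^4+\sum_{i=m+1}^n a_i^4$, and multiplying the two factors gives the claimed bound.

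The work here is bookkeeping rather than a new idea, and I expect the only delicate point to be getting the normalizations to line up: one must check that the nesting $S(r_{\text{cov}})\subseteq B$ makes the factorization an identity (not just an inequality), that conditioning on $B$ really yields a genuine uniform law on the truncated rectangle with independent coordinates, and that the effective mean and fourth-moment sum collapse exactly into the $\delta_1$ and $m(2r_{\text{cov}})^4+\sum_{i>m}a_i^4$ appearing in the statement. In particular I would track carefully whether the Chebyshev deviation parameter for the truncated cell should be $\delta_1$ or $\delta_1/n$, since this is exactly where the power of $n$ in the denominator is determined.
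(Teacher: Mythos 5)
Your proposal is correct and is exactly the paper's proof, which the authors state in one line as ``direct application of the Exclusion bound followed by the Chebyshev bound''; your factorization $\prob{\bm{X}\in S(r_{\text{cov}})}=\prob{B}\,\prob{\bm{X}\in S(r_{\text{cov}})\mid B}$ is precisely the rigorous way to chain the two, since $\prob{B}=\prod_{i=1}^m(2r_{\text{cov}}/a_i)$ reproduces the Exclusion prefactor and the conditional law is uniform on the truncated hyperrectangle with independent coordinates, so the Chebyshev theorem applies verbatim. The normalization question you flagged at the end resolves as follows: applying the Chebyshev theorem to the truncated cell gives deviation parameter $\delta=\delta_1/n$ (the theorem's $\delta$ carries a factor $1/n$, while the corollary's $\delta_1$ as defined does not), so the resulting denominator is $180\,n^2(\delta_1/n)^2=180\,\delta_1^2$; in other words, the explicit $n^2$ in the printed corollary is a notational slip inherited from the theorem, and your derivation proves the bound without it. The paper's own numerical evaluations confirm this reading: for $\Lambda_{16}$ (where $m=1$, $r_{\text{cov}}=\sqrt{3}$, $\delta_1=7/4$) the $n^2$-free formula yields $0.4854$, and for $\Lambda_{24}$ (where $m=12$, $r_{\text{cov}}=\sqrt{2}$, $\delta_1=39/4$) it yields $4.31\times 10^{-4}$, exactly the values quoted in the text, whereas retaining $n^2$ with the unnormalized $\delta_1$ would give a claim stronger by a factor of $n^2$ that does not follow from the two bounds. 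So your bookkeeping is not only correct but settles the one ambiguity in the statement in the direction the authors intended.
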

\begin{proof}
Direct application of the Exclusion bound followed by the Chebyshev bound.
\end{proof}
For the Barnes-Wall lattice and the Leech lattice this gives $\mbox{Prob}(X\in {\mathcal V}(0)) \leq 0.4854$ and $\mbox{Prob}(X\in {\mathcal V}(0)) \leq 4.314\times 10^{-4}$, respectively.

\subsection{Gaussian Distribution}
\label{secgauss}
	
	
We now analyze the Gaussian case, as described in Sec.~\ref{sec:errprob} for which $P_e$ is given by \eqref{eqn:gaussprob}. Analytic evaluation of this probability in closed form is difficult, even in low dimensional cases. 

Numerical analysis of $P_e$ for $n=2$  as a function of the packing density for various values of the noise variance $\sigma^2$ is presented in Fig.~\ref{fig:Gauss2} (this is the counterpart of Fig.~\ref{fig:packden2} for the Gaussian case). For a two dimensional lattice $\Lambda$ with basis $\{(1,0),(a,b)\},$ we have calculated the term $T$ in \eqref{eqn:gaussprob}, which we will refer to here as $P_e(\sigma^2, a, b)$.
We could observe that $\frac{\partial P_e(\sigma^2, a, b)}{\partial a} <0$ for $-\frac{1}{2} \leq a \leq 0$ and $b \geq \frac{\sqrt{3}}{2},$ therefore for a fixed variance $\sigma^2$ and fixed $b$, $P_e(\sigma^2, a, b)$ is decreasing with $a.$ Thus, the same minimization for the parameter $a$ done in Remark~\ref{remark_ep_pd} applies here. It is straightforward to conclude that smaller variance provides smaller error probability.

\begin{figure}[h!]
\begin{center}
		\includegraphics[height=6cm]{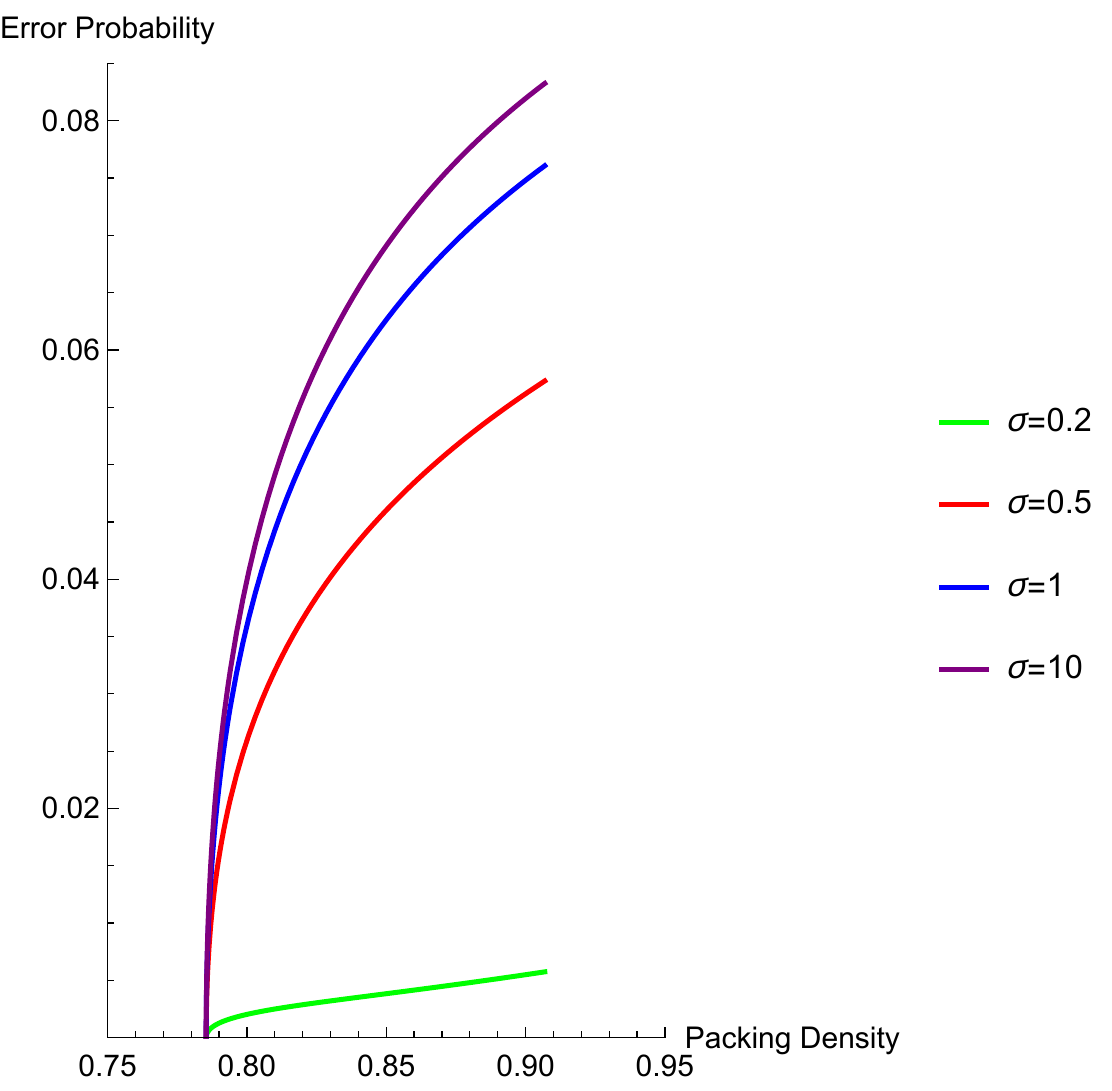} 
		\caption{Minimum error probability for given packing density assuming $\frac{\pi}{4} < \Delta_2 \leq \frac{\pi}{2\sqrt{3}} ~ (\text{or} ~ 3/4 \leq b^2 < 1),$ considering a Gaussian distribution. \label{fig:Gauss2} }
\end{center} 
\end{figure}

	Now, consider a lattice $\Lambda \subset \mathbb{R}^n,$ with $n$ large and its Voronoi cell $\cV.$ The largest radius among the radii of inscribed spheres in the Voronoi region $\cV$ is the packing radius $r_{\text{pack}}.$ Given that $\bm{Z} \sim \mathcal{N}(0, \sigma^2\bm{I})$, we are interested in calculating $\prob{\bm{Z} \in \cV(0) \bigcap \cB(0)}$.  Thus
\begin{equation}\label{prob1}
\prob{\bm{ Z} \in \cV(0)\bigcap \cB(0)} \geq \prob{\bm{Z} \in S({r_{\text{pack}}})\bigcap \cB(0)}, 
\end{equation}
where recall that $S(r)$ denotes the $n-$dimensional ball centered at zero with radius $r$.

%
The following theorem provides a condition on $\sigma$ under which $P_c \to 1$ as $n \to \infty$.
\begin{theorem}\label{thm_probe}(\textit{Condition on $\sigma^2$ for success probability}) $\prob{\bm{ Z} \in \cV(0)\bigcap \cB(0)} \to 1$ as $n\to \infty$ for all $\sigma^2 < \frac{\text{vol}(\Lambda)^{2/n}}{4},$ if $r_{\text{pack}} \leq \frac{|a_i|}{2},$ for all $i = 1, \ldots, n,$ where $a_i$ are the sizes of the Babai cell.
\end{theorem}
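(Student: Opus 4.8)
The plan is to lower-bound $P_c=\prob{\bm{Z}\in\cV(0)\cap\cB(0)}$ by the Gaussian mass of an inscribed Euclidean ball, and then to push that mass to $1$ by concentration of $\|\bm{Z}\|^2$. First I would nail down the two containments that make \eqref{prob1} useful. The ball of packing radius is always inscribed in the Voronoi cell, since $r_{\text{pack}}$ is precisely the in-radius of $\cV(0)$; hence $S(r_{\text{pack}})\subseteq\cV(0)$ with no extra hypothesis. Because the generator matrix is upper triangular, $\cB(0)$ is the axis-aligned box $\prod_{i=1}^n[-|a_i|/2,\,|a_i|/2]$ centered at the origin, whose in-radius is $\min_i|a_i|/2$; the standing hypothesis $r_{\text{pack}}\le|a_i|/2$ for every $i$ therefore forces $S(r_{\text{pack}})\subseteq\cB(0)$ as well. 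Combining the two gives $S(r_{\text{pack}})\subseteq\cV(0)\cap\cB(0)$, so that \eqref{prob1} sharpens to
\[
P_c=\prob{\bm{Z}\in\cV(0)\cap\cB(0)}\ \ge\ \prob{\bm{Z}\in S(r_{\text{pack}})}=\prob{\|\bm{Z}\|^2\le r_{\text{pack}}^2}.
\]

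Next I would analyze the right-hand side by concentration. Writing $\|\bm{Z}\|^2=\sum_{i=1}^n Z_i^2$ with the $Z_i$ i.i.d.\ $\mathcal{N}(0,\sigma^2)$, we have $E\|\bm{Z}\|^2=n\sigma^2$ and $\mathrm{Var}\,\|\bm{Z}\|^2=2n\sigma^4$, so by Chebyshev's inequality (equivalently the weak law of large numbers) $\|\bm{Z}\|^2/n\to\sigma^2$ in probability. Concretely, whenever $r_{\text{pack}}^2>n\sigma^2$ one has
\[
\prob{\|\bm{Z}\|^2> r_{\text{pack}}^2}\le\frac{2n\sigma^4}{(r_{\text{pack}}^2-n\sigma^2)^2},
\]
which tends to $0$ as soon as $r_{\text{pack}}^2$ dominates $n\sigma^2$ by a fixed relative margin, i.e.\ $\liminf_{n\to\infty}r_{\text{pack}}^2/(n\sigma^2)>1$. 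A sharper chi-square tail bound could be substituted here, but the requirement on $r_{\text{pack}}$ versus $n\sigma^2$ would be unchanged.

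The remaining, and in my view decisive, step is to verify that the hypotheses actually guarantee $r_{\text{pack}}^2>n\sigma^2$ for large $n$. Here I would combine the containment $r_{\text{pack}}\le|a_i|/2$ with $\prod_{i=1}^n|a_i|=|\det V|=\text{vol}(\Lambda)$ to locate $r_{\text{pack}}$ relative to the lattice scale $\text{vol}(\Lambda)^{1/n}$, and feed in the assumption $\sigma^2<\text{vol}(\Lambda)^{2/n}/4$ to control the size of $n\sigma^2$. I expect this comparison to be the main obstacle: the concentration argument needs $r_{\text{pack}}^2$ to overcome the dimensional factor carried by $E\|\bm{Z}\|^2=n\sigma^2$, and tracking that factor of $n$ is exactly where the admissible noise level and its constant must be pinned down. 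I would therefore treat this final inequality with particular care, since it is precisely what converts the geometric containment of the first step into the claimed limit $P_c\to1$.
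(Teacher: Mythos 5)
Your first two steps track the paper's own proof almost exactly: the paper also reduces $P_c$ to $\prob{\bm{Z}\in S(r_{\text{pack}})}$ via \eqref{prob1} and the hypothesis $r_{\text{pack}}\leq |a_i|/2$ (which, as you note, places the ball inside the box $\cB(0)$), and then sends this probability to $1$ using the threshold behavior of the chi-square CDF, $\chi^2_{CDF}(zn;n)\to 1$ for $z=r_{\text{pack}}^2/(\sigma^2 n)>1$ (citing Zamir, Cor.~7.2.2); your Chebyshev bound with $\mathrm{Var}\,\|\bm{Z}\|^2=2n\sigma^4$ is an elementary substitute that requires the identical condition, namely $n\sigma^2<r_{\text{pack}}^2$ with a fixed relative margin. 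So up to that point your attempt and the paper's proof are the same argument.

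The step you singled out as decisive is where the matter genuinely stands or falls, and your suspicion is vindicated: it cannot be completed in the direction required, and the paper's proof only appears to complete it by applying Minkowski's theorem backwards. The paper derives the true sufficient condition $\sigma^2 < r_{\text{pack}}^2/n = d_{\min}^2(\Lambda)/(4n)$, then invokes Minkowski's bound $d_{\min}(\Lambda)\leq \sqrt{n}\,\text{vol}(\Lambda)^{1/n}$ to write $\sigma^2 < r_{\text{pack}}^2/n \leq \text{vol}(\Lambda)^{2/n}/4$, and reads this chain as saying that $\sigma^2<\text{vol}(\Lambda)^{2/n}/4$ suffices. But that inequality chain shows the stated hypothesis is a \emph{consequence} of the needed condition, not a substitute for it: since $r_{\text{pack}}^2/n\leq \text{vol}(\Lambda)^{2/n}/4$ always (indeed, your own observation $\prod_i|a_i|=\text{vol}(\Lambda)$ together with $r_{\text{pack}}\leq|a_i|/2$ already gives $r_{\text{pack}}^2\leq \text{vol}(\Lambda)^{2/n}/4$, a factor of $n$ weaker still), the hypothesis admits noise levels up to $n$ times larger than the concentration argument can tolerate. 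A concrete check: for $\Lambda=\mathbb{Z}^n$ the hypotheses hold with $\text{vol}(\Lambda)=1$ and $r_{\text{pack}}=1/2=|a_i|/2$, so the theorem asserts $P_c\to 1$ for any fixed $\sigma^2<1/4$; yet here $\cV(0)=\cB(0)=[-1/2,1/2]^n$ and $P_c=\prob{|Z_1|\leq 1/2}^n\to 0$ since $\prob{|Z_1|\leq 1/2}<1$. So your proof as written does not establish the theorem, but no completion of your third step exists: what your argument (and, honestly read, the paper's) actually proves is convergence under the stronger, dimension-dependent condition $\liminf_n r_{\text{pack}}^2/(n\sigma^2)>1$, i.e.\ $\sigma^2$ must shrink like $d_{\min}^2(\Lambda)/(4n)$, not merely stay below $\text{vol}(\Lambda)^{2/n}/4$.
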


\begin{proof} This follows from
\[
 \mbox{Prob}(||\bm{Z}|| < r_{\text{pack}}) = \chi^2_{CDF}\left(\frac{r_{\text{pack}}^2}{\sigma^2}; n \right),
\]
where $\chi^2_{CDF}(x;k)$ stands for the cumulative chi-squared distribution function with $k$ degrees of freedom. For $\chi^2_{CDF}(zn;n),$ if we take $z = \frac{r_{\text{pack}}^2}{\sigma^2 n},$ it is valid for large $n$, according to Corollary 7.2.2 \cite[p. 145]{zamir2014} that
{\small \begin{align*}
\chi^2_{CDF}(zn;n) \approx \begin{cases}
1, & z>1 \\
0, & z<1
\end{cases}.
\end{align*}}

	Since we want $  \prob{\bm{Z} \in S({r_{\text{pack}}})} \to 1 $ we must have that $z > 1 \Rightarrow$ and
\[
z= \frac{r_{\text{pack}}^2}{\sigma^2 n} > 1 \Rightarrow \sigma^2 < \frac{r_{\text{pack}}^2}{n} =\frac{d_{\min}^2(\Lambda)}{4n} = \frac{||\lambda||_{2}^2}{4n},
 \]
where $d_{\min}(\Lambda)$ is the minimum distance among all lattice points and $\lambda = \inf\{||x||_{2}: x \in \Lambda\setminus\{0\}\}.$ 
	
	Recall that from Minkowski theorem \cite{Cassels:1997}, one can upper bound the Euclidean norm of the shortest vector in a given lattice $\Lambda$ by $\sqrt{n}~\text{vol}(\Lambda)^{1/n}.$ Thus,
\[
\sigma^2 < \frac{||\lambda||_{2}^2}{4n} \leq \frac{(\sqrt{n}~\text{vol}(\Lambda)^{1/n})^{2}}{4n} = \frac{\text{vol}(\Lambda)^{2/n}}{4}.
\]
	
	Therefore, due to the fact that $r_{\text{pack}} \leq \frac{|a_i|}{2},$ for all $i,$ where $a_i$ denotes the sizes of the Babai cell, then for values of $\sigma^2 < \frac{\text{vol}(\Lambda)^{2/n}}{4},$ we can guarantee that $\prob{\bm{ Z} \in \cV(0)\bigcap \cB(0)} \geq \prob{\bm{Z} \in S({r_{\text{pack}}})} $ $\to 1$ as $n \to \infty$.
	
\end{proof}

	Thm.~\ref{thm_probe} states that if the variance $\sigma^2$ satisfies the proposed condition, then estimating the Babai point is enough to guarantee the correct solution for the  nearest lattice point problem. In particular, examples where the hypothesis of Thm.~\ref{thm_probe} are satisfied includes the cubic lattice $\mathbb{Z}^n$ or rectangular lattices and we reach an analogous conclusion to the uniform case, i.e., that the error probability is vanishing for cubic (and rectangular) lattices.
	
%
	
\section{Conclusions and Future Work}	
\label{secC}

We 	have considered the  problem of finding an approximate nearest point in a given lattice $\Lambda$ to $\bm{x} \in \mathbb{R}^n$  in a distributed network. We assumed that  each component of the vector $\bm{x}$ is available at a distinct sensor node and the  lattice point is to be obtained at a central node. Thus each sensor node sends a quantized version of its observation  to a central node. 
	
A protocol for transmitting this information to the central node is presented, its communication rate is determined, and is shown to be optimal when the components of $\bm{X}$ are mutually independent. We then consider the problem of evaluating the error probability, namely, the probability that the approximate nearest lattice point does not coincide with the nearest lattice point. Closed form expressions for the error probability  are derived in two dimensions. For the three dimensional case, using an obtuse superbase, we have estimated computationally for random lattices the worst error probability. For dimensions greater than $3$, we have obtained  bounds for the error probability.  Our results show that the error probability becomes larger as the packing density of the lattice becomes larger.  When the vector $\bm{x}$ is uniformly distributed over a certain region, it will be necessary to send extra bits to compute the nearest lattice point. However, when $\bm{x}$ is obtained by the addition of Gaussian noise of sufficiently small variance to a lattice point, no further communication will be necessary.



\section{Acknowledgment}
CNPq (140797/2017-3, 312926/2013-8) and FAPESP (2013/25977-7) supported MFB and SIRC. VV was supported by  CUNY-RF and CNPq (PVE 400441/2014-4). We thank the reviewers of a previous draft for their constructive comments which helped improve the paper.

%
%

\end{document}